\newtheorem{theorem}{Theorem}[section]
\newtheorem{corollary}{Corollary}[theorem]
\newtheorem{lemma}[theorem]{Lemma}
\newtheorem{definition}[theorem]{Definition}
\newcommand{\conpcheck}{\ensuremath{\mathsf{CoNPCheck}}}
\newcommand{\twoqbfcheck}{\ensuremath{\mathsf{2QBFCheck}}}
\newcommand{\threeqbfcheck}{\ensuremath{\mathsf{3QBFCheck}}}
\newcommand{\Vars}{\ensuremath{\mathsf{Vars }}}
\newcommand{\Cell}[2]{\ensuremath{\mathsf{Cell}_{\langle #1, #2 \rangle}}}
\newcommand{\Cnt}[2]{\ensuremath{\mathsf{Cnt}_{\langle #1, #2 \rangle}}}
\newcommand{\satisfying}[1]{\ensuremath{sol({#1})}} %
\newcommand{\prob}{\ensuremath{\mathsf{Pr}}}
\newcommand{\expect}{\ensuremath{\mathsf{E}}}
\newcommand{\bigO}{\ensuremath{\mathcal{O}}}
\newcommand{\sol}[1]{|\satisfying{#1}|}   %
\newcommand{\val}{v}   %
\newcommand{\Cest}{\mathsf{CntEst}} %
\newcommand{\hashstock}{\mathsf{hashassgn}_{stock}}
\newcommand{\hashcells}{\mathsf{hashassgn}_{cells}}
\newcommand{\chigh}{c_\mathsf{high}}
\newcommand{\clow}{c_\mathsf{low}}
\newcommand{\NP}{\mathsf{NP}}
\newcommand{\coNP}{\mathsf{CoNP}}
\newcommand{\myP}{\mathsf{P}}
\newcommand{\Sigmatwop}{{\Sigma_2^\myP}}
\newtheorem{proposition}[theorem]{Proposition}
\newtheorem{claim}[theorem]{Claim}
\newcommand{\vecbold}[1]{\ensuremath{{\boldsymbol{#1}}}}
\newcommand{\toolname}{\ensuremath{\mathsf{AFCounter}}}
\newcommand{\fullversion}[1]{#1}
\newcommand{\shortversion}[1]{}
\title{Auditable Algorithms for Approximate Model Counting
\thanks{
	The authors decided to forgo the old convention of alphabetical ordering of authors in favor of a
	randomized ordering, denoted by \textcircled{r}. The publicly verifiable record of the randomization is available at
	\protect\url{https://www.aeaweb.org/journals/policies/random-author-order/search}
}
}
 \author{
Kuldeep S. Meel \textsuperscript{\rm 1} \textcircled{r}
Supratik Chakraborty \textsuperscript{\rm 2} \textcircled{r}
S. Akshay\textsuperscript{\rm 2} \\ \\ 
    \textsuperscript{\rm 1} University of Toronto, Canada\\
     \textsuperscript{\rm 2} Indian Institute of Technology Bombay, Mumbai, India 
 }
\begin{document}

\maketitle

\begin{abstract}
Model counting, or counting the satisfying assignments of a Boolean
formula, is a fundamental problem with diverse applications.  Given
$\#P$-hardness of the problem, developing algorithms for approximate
counting is an important research area. Building on the practical
success of SAT-solvers, the focus has recently shifted from theory to
practical implementations of approximate counting algorithms. This has
brought to focus new challenges, such as the design of auditable
approximate counters that not only provide an approximation of the
model count, but also a certificate that a verifier with limited
computational power can use to check if the count is indeed within the
promised bounds of approximation.

Towards generating certificates, we start by examining the best-known
deterministic approximate counting algorithm~\cite{stockmeyer} that
uses polynomially many calls to a $\Sigma_2^P$ oracle. We show that
this can be audited via a $\Sigma_2^P$ oracle with the query constructed over 
$n^2 \log^2 n$ variables, where the original formula has $n$
variables.  Since $n$ is often large, we ask if the count of variables
in the certificate can be reduced -- a crucial question for potential
implementation. We show that this is indeed possible with a tradeoff
in the counting algorithm's complexity. Specifically, we develop new
deterministic approximate counting algorithms that invoke a
$\Sigma_3^P$ oracle, but can be certified using a $\Sigma_2^P$ oracle
using certificates on far fewer variables: our final algorithm uses only $n \log n$ variables.
Our study demonstrates that one can simplify auditing significantly if
we allow the counting algorithm to access a slightly more powerful
oracle. This shows for the first time how audit complexity can be
traded for complexity of approximate counting.
\end{abstract}
\section{Introduction}
\label{sec:intro}
Trust in computation done by third-party software is increasingly
becoming a concern in today's world.  As an example, consider a
software implementing an algorithm for which we already have a
rigorous proof of correctness.  Even when the software is implemented
by trusted and expert programmers, one cannot be sure that the
implementation will never produce an erroneous result.  %
Insidious bugs may go undetected for several reasons: for
example, the implementation may be too complex to have been tested
comprehensively, or it may use code from untrusted libraries that have
bugs, or the programmer may have inadvertently introduced a bug.
In such a scenario, we wish to ask: \emph{How can we provide assurance
that an untrusted implementation of an algorithm has produced a
correct result for a given problem instance?}

One way to answer the above question is to formally model the semantics of
every instruction in the implementation in a theorem proving system
and then prove that the composition of these instructions always
produces the correct result.
This is a challenging and skill-intensive exercise for anything but
the simplest of software, but once completed, it gives very
high assurance about the correctness of results produced by the
implementation \emph{for all instances of the problem being solved.}
A relatively less daunting alternative is to have the algorithm (and
its implementation) output for each problem instance, not only the
result but also a \emph{certificate} of correctness of the result.
The certificate must be independently checkable by an auditor, and a
successful audit must yield a proof of correctness of the computed
result \emph{for the given problem instance}.  We call such
algorithms \emph{auditable algorithms}. Note that a successful audit
for one problem instance does not prove bug-freeness of the
implementation; it only shows that the result computed for this
specific problem instance is correct.  A key requirement for effective
use of auditable algorithms is access to an independent trusted
certificate auditor. Fortunately, a certificate auditor is usually
much simpler to design, and its implementation far easier to prove
correct than the implementation of the original algorithm.  Therefore,
auditable algorithms provide a pragmatic solution to the problem of
assuring correctness for untrusted implementations of algorithms.  Not
surprisingly, auditable algorithms have been studied earlier in the
context of different
problems~\cite{mehlhorn-auditable,Heule13,Wetzler14,Gocht22,Ekici17,Barthe06,Necula98,Magron15,Cheung16}.

In this paper, we present the first detailed study of auditable
algorithms for model counting with provable approximation guarantees--
an important problem with diverse applications.  Our study reveals a
nuanced landscape where the complexity of an auditable algorithm can
be ``traded off'' for suitable measures of complexity of certificate
auditing.  This opens up the possibility of treating certificate audit
complexity as a first-class concern in designing auditable
algorithms.

Before we delve into further details, a brief background on model
counting is useful.  Formally, given a system of constraints
$\varphi$, model counting requires us to count the solutions (or
models) of $\varphi$.  This problem is known to be computationally
hard~\cite{Valiant79}, and consequently, 
significant effort has therefore been invested over the
years in designing approximate model counters.
Over the last
decade, riding on spectacular advances in propositional satisfiability
(SAT) solving techniques (see~\cite{sat-technique} for details),
approximate model counting technology has matured to the point where
we can count solutions of propositional constraints with several
thousands of variables within a couple of hours on a standard laptop,
while providing strong approximation guarantees~\cite{approxmc}.  This
has led to a surge in the number of tools from different application
domains that have started using approximate model counters as
black-box
engines~\cite{klebanov2014,teuber2021quantifying,beck2020automating,zinkus2023mcfil}
to solve various problems.  With increasing such usage comes
increasing demands of assurance on the correctness of counts computed
by approximate counters.  Auditable approximate counting offers a
promising approach to solve this problem.

Approximate model counting algorithms come in two different flavours,
viz. deterministic and randomized.  While randomized counting
algorithms with probably approximately correct (PAC)
guarantees~\cite{Valiant84,Valiant84CACM} have been found to scale
better in practice, { it is not possible to provide a meaningful
certificate of correctness for the result computed by a single run of
a randomized algorithm (see~\cite{mehlhorn-auditable} for a nice
exposition on this topic)}.  Therefore, we focus on deterministic
algorithms for approximate model counting.  Specifically, we ask: (a)
what would serve as a meaningful certificate for the best-known
deterministic approximate counting algorithm~\cite{stockmeyer}?
(b) what is the fine-grained complexity of auditing such a
certificate?  (c) can we re-design the approximate counting algorithm
so as to make certificate auditing more efficient?  and (d) is there a
tradeoff in the complexity of an auditable deterministic approximate
counting algorithm and the fine-grained complexity of certificate
auditing?  Our study shows that there is a nuanced interplay between
the complexity of auditable approximate counting algorithms and that
of auditing the certificates generated by such algorithms. Therefore,
if efficiency of certificate auditing is treated as a first-class
concern, we must re-think the design of auditable
algorithms for approximate counting.  Indeed, we present two new
auditable approximate counting algorithms that
significantly improve the efficiency of certificate auditing at the
cost of making the counting algorithms themselves a bit less
efficient.

The primary contributions of the paper can be summarized as listed
below.  In the following, we wish to count models of a propositional
formula on $n$ variables, upto a constant approximation factor.  %
\begin{enumerate}
\item We first show that the best-known deterministic approximate
counting algorithm (from~\cite{stockmeyer}) can be converted to an
auditable algorithm that generates a certificate over $\bigO(n^2\log^2
n)$ variables.  The resulting counting algorithm uses
$\bigO(n\log n)$ invocations of a $\Sigma_2^P$ oracle, as in
Stockmeyer's original algorithm.  An auditor, however,
needs only an invocation of a $\Sigma_2^P$ oracle on formulas
constructed over $\bigO(n^2\log^2 n)$ variables.
\item Next, we develop an deterministic approximate counting algorithm for which the certificate uses only $\bigO(n^2)$ variables.  This reduction is achieved by allowing the counting algorithm access to a more powerful $\Sigma_3^P$ (instead of a
$\Sigma_2^P$) oracle, that is invoked only $\bigO(n)$ times.
Certificate auditing in this case requires a single invocation of a
$\Sigma_2^P$ oracle on a formula constructed over $\bigO(n^2)$
variables.
\item Finally, we present another deterministic approximate
counting algorithm that requires a certificate over only $\bigO(n \log
n)$ variables.  This makes certificate auditing significantly more
efficient compared to the cases above, when viewed through the lens of
fine-grained complexity.  This improvement is obtained by allowing the
counting algorithm to invoke a $\Sigma_3^P$ oracle $\bigO(n \log n)$
times. Certificate auditing now requires one invocation of a
$\Sigma_2^P$ oracle on formulas over $\bigO(n \log n)$ variables.
\end{enumerate}
Our results show that the fine-grained complexity of certificate
checking can be ``traded off'' significantly against the power of
oracles accessible to a determistic approximate counting algorithm.
This opens up the possibility of designing new counting algorithms
that are cognizant of certificate checking complexity. Our
proofs involve reasoning about specially constructed formulas with
qantifier alternations, with incomplete information about parts of the
formula.  
This is quite challenging in general, and one of our technical novelties lies in our application of the probabilistic method in the proofs.

The remainder of the paper is organized as follows.  We present
some preliminaries and notation in Section~\ref{sec:prelim}, and
formalize the notion of an audit for approximate counting
algorithms.  In Section~\ref{sec:stock}, we discuss how Stockmeyer's algorithm can be turned into an auditable algorithm, and also present the corresponding audit algorithm. Sections~\ref{sec:bgp} and \ref{sec:combined} discuss the
design of two new auditable algorithms for approximate counting
with successively small certificate sizes, and the complexity
of auditing these certificates.  Finally, we conclude with
a discussion of our findings in Section~\ref{sec:conclusion}. %

\section{Preliminaries}
\label{sec:prelim}
Let $F$ be a Boolean formula in conjunctive normal form (CNF), and let
$\Vars(F)$, also called its \emph{support}, be the set of variables
appearing in $F$.  An assignment $\sigma$ of truth values to the
variables in $\Vars(F)$ is called a \emph{satisfying assignment}
or \emph{witness} of $F$ if it makes $F$ evaluate to true.  We denote
the set of all witnesses of $F$ by $\satisfying{F}$ and its count by
$\sol{F}$. Throughout the paper, we use $n$ to denote $|\Vars(F)|$.
We also use bold-faced letters to represent vectors of Boolean
variables, when there is no confusion.

We write $\prob\left[\mathcal{Z}: {\Omega} \right]$ to denote the
probability of outcome $\mathcal{Z}$ when sampling from a probability
space ${\Omega}$.  For brevity, we omit ${\Omega}$ when it is clear
from the context.  The expected value of $\mathcal{Z}$ is denoted
$\expect\left[\mathcal{Z}\right]$ and its variance is denoted
$\sigma^2\left[\mathcal{Z}\right]$. The quantity
$\frac{ \sigma^2\left[\mathcal{Z}\right]}{\expect\left[\mathcal{Z}\right]}$
is called the dispersion index of the random variable $\mathcal{Z}$.
Given a distribution $\mathcal{D}$, we use
$\mathcal{Z} \sim \mathcal{D}$ to denote that $\mathcal{Z}$ is sampled
from the distribution $\mathcal{D}$.

We introduce some notation from complexity theory for later use. Let
$\mathcal{C}$ be a decision complexity class.  An \emph{oracle} for
$\mathcal{C}$ is a (possibly hypothetical) computational device that
when queried with any problem in $\mathcal{C}$, answers it correctly
in one time unit.  If $\mathcal{C}_1$ and $\mathcal{C}_2$ are two
complexity classes, we use $\mathcal{C}_1^{\mathcal{C}_2}$ to denote
the class of problems solvable by an algorithm for a problem in
$\mathcal{C}_1$ with access to an oracle for $\mathcal{C}_2$.  The
complexity classes $\myP$, $\NP$ and $\coNP$ are well-known. The
classes $\Sigma_2^P$ and $\Sigma_3^P$ are defined as $\NP^{\NP}$ and
$\NP^{\Sigma_2^P}$ respectively.

\paragraph{\bfseries Deterministic Approximate Counting (DAC):}
Given a Boolean formula $F$ in CNF and an approximation factor $k$ ($> 1$), \emph{deterministic approximate
counting} (or DAC for short), requires us to compute an estimate
$\Cest~(\geq 0)$ using a deterministic algorithm such that
$\frac{\sol{F}}{k} \leq \Cest \leq \sol{F} \cdot k$.
It has been shown in~\cite{stockmeyer} that DAC is in
$\myP^{\Sigmatwop}$, the crucial idea in the proof being use of
pairwise independent hash functions to partition the space of all
assignments.  Given a DAC algorithm for any $k > 1$, we can apply it
on independent copies of $F$ conjoined together to obtain $\sol{F}$ to
within any approximation factor $> 1$. Therefore, we will often be
concerned with DAC algorithms for some convenient $k$, viz. $8$ or
$16$.

Practical implementations of DAC algorithms must of course use SAT/QBF
solvers as proxies for oracles like the $\Sigmatwop$ oracle
in~\cite{stockmeyer}. Therefore, it is important in practice to look
at the finer structure of oracle invocations, especially the support
sizes and exact quantifier prefix of the formulas fed to the oracles.

\paragraph{\bfseries $k$-wise independent hash family:}

For $n,m\in \mathbb{N}^{\ge 0}$, let $\mathcal{H}(n,m)$ denote a family
of hash functions from $\{0,1\}^n$ to $\{0,1\}^m$. We use
$h \xleftarrow{R} \mathcal{H}(n,m)$ to denote the probability space
obtained by choosing $h$ uniformly at random from
$\mathcal{H}(n,m)$.

\begin{definition}
	For $k \ge 2$, we say that $\mathcal{H}(n,m)$ is a family of
        $k$-wise independent hash functions if
        for every distinct $\vecbold{y_1}, \ldots \vecbold{y_k} \in \{0,1\}^n$ and for every (possibly repeated) $\vecbold{\alpha_1}, \ldots \vecbold{\alpha_k} \in \{0,1\}^m$,
$		\Pr [ h(\boldsymbol{y_1}) = \boldsymbol{\alpha_1} \wedge  \cdots h(\boldsymbol{y_k}) = \boldsymbol{\alpha_k}] = \left(\frac{1}{2^m}\right)^k,
$
where the probability space is $h \xleftarrow{R} \mathcal{H}(n,m)$.
Such a family is also sometimes denoted $\mathcal{H}(n,m,k)$ to highlight the degree of independence.%
\end{definition}	

Significantly, every $h \in \mathcal{H}(n,m,k)$
can be specified by a $k$-tuple of coefficients ($a_1, a_2, \ldots
a_k$) from the finite field $GF(2^{max(n,m)})$ and vice
versa~\cite{BGP2000}.  Hence, $\bigO\big(k.\max(m,n)\big)$ bits
suffice to represent $h$.  We leverage this observation to simplify
notation and write $\exists h$ as syntactic sugar for existentially
quantifying over the $\bigO\big(k.\max(m,n)\big)$ Boolean variables
that represent $h$.

For every formula $F$ on $n$ variables, and for every
$\vecbold{\alpha} \in \{0,1\}^m$, define $\Cell{F}{h,\vecbold{\alpha}}$ to be the
(sub-)set of solutions of $F$ that are mapped to $\vecbold{\alpha}$ by $h$.
Formally, $\Cell{F}{h,\vecbold{\alpha}} = \{\vecbold{\sigma} \mid \vecbold{\sigma} \in \{0,1\}^n$,
$\vecbold{\sigma}\models F$ and $h(\vecbold{\sigma}) = \vecbold{\alpha}\}$.  For 
convenience, we use $\Cnt{F}{h,\vecbold{\alpha}}$ to denote the cardinality of
$\Cell{F}{h,\vecbold{\alpha}}$.

The following result proves useful in our analysis later.
\begin{proposition}{\cite{BR1994}}
	Let $t \geq 4$ be an even integer. Let $Z_1, Z_2, \ldots Z_k$
	be $t$-wise independent random variables taking values in
	$[0,1]$. Let $Z = Z_1 + Z_2 + \ldots Z_k$, $\mu
	= \expect[Z]$. Then
$        \Pr[|Z-\mu| \geq \varepsilon \mu ] \leq
                    8 \cdot \left(\frac{t\mu +t^2}{\varepsilon^2 \mu^2}\right)^{t/2}
$
for all $\varepsilon > 0$.
\end{proposition}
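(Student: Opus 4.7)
The plan is to apply Markov's inequality to the $t$-th central moment $(Z-\mu)^t$, which is nonnegative because $t$ is even, and then obtain a sharp bound on this moment using $t$-wise independence. First I would write
\[
\Pr\bigl[|Z-\mu|\geq \varepsilon\mu\bigr] \;=\; \Pr\bigl[(Z-\mu)^t \geq \varepsilon^t\mu^t\bigr] \;\leq\; \frac{\expect[(Z-\mu)^t]}{\varepsilon^t\mu^t},
\]
reducing the whole task to establishing the central moment bound $\expect[(Z-\mu)^t] \leq 8(t\mu + t^2)^{t/2}$.

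Next I would set $\mu_i = \expect[Z_i]$ and $Y_i = Z_i - \mu_i$, so that $\expect[Y_i]=0$, $|Y_i|\leq 1$, and $Z-\mu = \sum_i Y_i$; then expand by the multinomial theorem:
\[
\expect[(Z-\mu)^t] \;=\; \sum_{\substack{a_1+\cdots+a_k=t \\ a_i \geq 0}} \binom{t}{a_1,\ldots,a_k}\, \expect\!\left[\prod_i Y_i^{a_i}\right].
\]
Two observations cull this sum sharply. (i) At most $t$ of the $a_i$ are nonzero, so $t$-wise independence gives $\expect[\prod_i Y_i^{a_i}] = \prod_i \expect[Y_i^{a_i}]$. (ii) Because $\expect[Y_i]=0$, any term in which some $a_i = 1$ vanishes, so only tuples whose nonzero entries are all at least $2$ survive. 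Further, $|Y_i|\leq 1$ combined with $Z_i \in [0,1]$ (so that $Z_i^2 \leq Z_i$) gives $|\expect[Y_i^{a_i}]| \leq \expect[Y_i^2] \leq \mu_i(1-\mu_i) \leq \mu_i$ for each surviving $a_i \geq 2$. I would then regroup the surviving terms by the number $s$ of indices with $a_i \geq 2$; since each such $a_i \geq 2$ and they sum to $t$, we have $1 \leq s \leq t/2$. For each composition $(b_1,\ldots,b_s)$ of $t$ into $s$ parts of size $\geq 2$, the contribution is bounded by a multinomial factor times the symmetric sum $\sum_{i_1,\ldots,i_s\text{ distinct}} \mu_{i_1}\cdots\mu_{i_s} \leq \mu^s$, using $\sum_i \mu_i = \mu$.

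The hard part will be the final combinatorial bookkeeping: tracking multinomial coefficients, compositions of $t$ into parts $\geq 2$, and distinct-index symmetric sums precisely enough to extract the constant $8$ and the exact form $(t\mu + t^2)^{t/2}$ rather than a looser bound of shape $C_t\,\mu^{t/2}$. The cleanest route is to split the analysis by whether the partition of exponents has any block of size $\geq 3$: when every block has size exactly $2$ one recovers a Gaussian-type contribution of order $(t\mu)^{t/2}$ (matching what pairwise-independent second-moment bounds would already give), whereas allowing a block of size $\geq 3$ contributes the $t^2$ term, essentially through an extra factor of $t$ absorbed per oversized block together with the penalty of one fewer ``free'' index in the symmetric sum. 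Adding these two regimes and uniformly bounding the combinatorial prefactors yields the advertised $8(t\mu + t^2)^{t/2}$.
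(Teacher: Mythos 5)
The paper does not prove this proposition --- it quotes it directly from Bellare--Rompel \cite{BR1994} --- so there is no internal proof to compare against. That said, your plan follows the same route Bellare--Rompel take: Markov on the even $t$-th central moment, multinomial expansion, factorization via $t$-wise independence, annihilation of terms with a singleton exponent, and the bound $|\expect[Y_i^{a_i}]| \le \expect[Y_i^2] \le \mu_i$. Those steps are all correct (and your observation that $Z_i \in [0,1]$ gives $Z_i^2 \le Z_i$, hence $\expect[Y_i^2] \le \mu_i$, is exactly the right way to use the boundedness), and your reduction to $\expect[(Z-\mu)^t] \le 8(t\mu + t^2)^{t/2}$ is algebraically faithful to the statement.

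The genuine gap is the step you yourself flag as ``the hard part.'' The argument as written ends with a sketch --- ``split into the regime where every block has size exactly $2$ and the regime with some block of size $\ge 3$, add the two, and uniformly bound the prefactors'' --- and that two-regime decomposition is too coarse to recover $8(t\mu + t^2)^{t/2}$. The target bound expands as $\sum_{j=0}^{t/2}\binom{t/2}{j}(t\mu)^{j}(t^2)^{t/2-j}$, and the surviving tuples in your multinomial sum contribute different powers of $\mu$ depending on the \emph{number} $s$ of nonzero blocks, not merely on whether some block is oversized; you need a term-by-term correspondence between the $s$-block contributions (with the multinomial weight $t!/(b_1!\cdots b_s! \, s!)$ summed over compositions of $t$ into $s$ parts $\ge 2$, times $\mu^s$) and the binomial terms above. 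Your all-pairs case ($s=t/2$) does give $(t-1)!! \, \mu^{t/2} \le (t\mu)^{t/2}$, which is a good sanity check, but you have not shown how the intermediate $s$ values with $1\le s < t/2$ assemble into the remaining binomial terms, nor where the constant $8$ comes from. As written the proposal asserts the moment bound rather than establishing it; you would need to carry out that counting (or cite Bellare--Rompel's Lemma directly, as the paper does) to close the argument.
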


\subsection{Audit Complexity}

Let $A$ be an auditable algorithm that solves (or claims to solve) DAC
with a known constant-factor approximation, say $k$.  Let
$\mathcal{K}$ denote the certificate generated by $A$ for an input
formula $F$ over $n$ variables.  We expect $\mathcal{K}$ to be a
string of Boolean values of size that is a function of $n$.  We can
now define the notion of an auditor for $A$ as follows.

\begin{definition} An auditor for $A$ is an algorithm $B$ that
  takes as inputs the formula $F$,
  the estimate $\Cest$ returned by $A$, and the
  certificate $\mathcal{K}$.  Algorithm $B$ uses an
  oracle of complexity class $\mathcal{C}$ to certify that $\Cest$
  lies within the claimed $k$-factor approximation of $\sol{F}$.  The {\em
    audit complexity of algorithm $A$} is defined to be the tuple ($\mathcal{C}, r, t$), where the auditor can certify the answer
  computed by $A$ by making $t$ calls to an oracle of complexity class
  $\mathcal{C}$, where each call/query is over a formula with at most
  $r$ variables.
\end{definition}
In our context, motivated by Stockmeyer's result~\cite{stockmeyer}, we restrict
$\mathcal{C}$ to be $\Sigmatwop$, and further fix the number of calls $t$ to be $1$. Hence, we omit $\mathcal{C}$ and $t$, and simply refer to $r$ as the audit complexity of the algorithm $A$.

\section{Stockmeyer's Algorithm and its Audit}
\label{sec:stock}
In this section, we consider a celebrated DAC algorithm due to Stockmeyer~\cite{stockmeyer}, and analyze its audit complexity. Towards this end, we
first consider a QBF formula that plays a significant role in Stockmeyer's algorithm.   For a given formula Boolean formula $F$ on $n$ variables, and for $1 \le m \le n$, consider
  \begin{align*}
 	\varphi_{stock}^F(m):=  \exists h_1\ldots \exists h_m, \forall \vecbold{\vecbold{z_1}}, \forall \vecbold{\vecbold{z_2}}~~~~~~~~~~~~~~~~~~~~~~~~~~~~~~~~~~~~~\\ \bigvee_{i=1}^m \big( F(\vecbold{\vecbold{z_1}})\land (h_i(\vecbold{\vecbold{z_1}})=h_i(\vecbold{\vecbold{z_2}}))\rightarrow \neg F(\vecbold{\vecbold{z_2}})\big)
 \end{align*}
 
In the above formula, the $h_i$'s are hash functions from
$\mathcal{H}(n, m, 2)$.  The formula says that there exist $m$ such hash
functions such that for any two assignments
$\vecbold{z_1}, \vecbold{z_2}\in \{0.1\}^n$, if $\vecbold{z_1}$ is a
solution for $F$ and some hash function $h_i$ maps $\vecbold{z_1}$ and
$\vecbold{z_2}$ to the same cell, then $\vecbold{z_2}$ cannot be a
solution. In other words, for each solution some hash function puts it
in a cell where it is the only solution. {Recall that $\exists h_i$ is
shorthand for existentially quantifying over coefficients of $h$.}

Now, the following claim and lemma provide necessary and sufficient conditions for $\varphi_{stock}^F(m)$ evaluating to True. 

\begin{claim}[Necessary]
 	\label{cl:stockup}
 	If $\varphi^F_{stock}(m)=1$, then  $\sol{F}\leq m\cdot 2^m$.
\end{claim}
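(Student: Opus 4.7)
The plan is to extract from $\varphi^F_{stock}(m) = 1$ the combinatorial isolation property described informally in the text and then close the argument with a one-line union bound.

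First I would fix Skolem witnesses $h_1,\dots,h_m \in \mathcal{H}(n,m,2)$ for the outer $\exists$-block and interpret the quantifier-free body $\bigvee_{i=1}^m\bigl(F(\vecbold{z_1})\wedge h_i(\vecbold{z_1})=h_i(\vecbold{z_2})\rightarrow \neg F(\vecbold{z_2})\bigr)$ (under the governing $\forall\vecbold{z_1}\forall\vecbold{z_2}$) according to the reading spelled out just before the claim: for every $\vecbold{z} \in \satisfying{F}$ there exists an index $i \in [m]$ such that $\vecbold{z}$ is the unique solution of $F$ in its $h_i$-cell, i.e., $\Cell{F}{h_i, h_i(\vecbold{z})} = \{\vecbold{z}\}$. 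Extracting this per-solution isolation guarantee from the quantifier alternation is the only delicate step; everything that follows is routine counting.

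For each $i \in \{1,\dots,m\}$ I would then introduce
\[
S_i \;:=\; \{\vecbold{z} \in \satisfying{F} : \Cell{F}{h_i, h_i(\vecbold{z})} = \{\vecbold{z}\}\},
\]
so that by the isolation property $\satisfying{F} = \bigcup_{i=1}^{m} S_i$. Two distinct elements of $S_i$ must lie in distinct $h_i$-cells (otherwise they would share a cell, contradicting the ``alone-in-cell'' requirement for membership in $S_i$), and the range of $h_i$ has only $2^m$ cells, so $|S_i| \leq 2^m$ for each $i$.

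A union bound then finishes the argument:
\[
\sol{F} \;=\; \left|\bigcup_{i=1}^{m} S_i\right| \;\leq\; \sum_{i=1}^{m} |S_i| \;\leq\; m\cdot 2^m.
\]
The main obstacle is the first step---faithfully transcribing the $\forall\vecbold{z_1}\forall\vecbold{z_2}\bigvee_i$ alternation into the per-solution isolation guarantee used above---so I would explicitly lean on the informal reading of $\varphi^F_{stock}(m)$ given in the text; the remaining counting and union bound are one-liners.
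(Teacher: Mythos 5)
Your proof is correct and follows essentially the same route as the paper's own one-liner: extract the per-solution isolation reading of $\varphi_{stock}^F(m)$ (each solution $\vecbold{z}$ has some $h_i$ placing it alone in its $h_i$-cell) and then count. The paper packages the counting as an injective map from $\satisfying{F}$ into $[m]\times\{0,1\}^m$ sending $\vecbold{z}$ to $(i, h_i(\vecbold{z}))$, whereas you phrase it as a union bound over the sets $S_i$; these are the same argument.
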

\fullversion{
\begin{proof} To see this, note that if $\varphi_{stock}^F(m)=1$, there
  	exists an injective map from $\sol{F}$ to $[m] \times \{0,1\}^m$.
  	The solution $z$ of $F$ can be uniquely identified a pair $(i,h_i(z))$ where $h_i$ is the hash function that puts it in a bucket $h_i(z)$. Hence, we get $\sol{F}\leq m\cdot 2^m$.
  \end{proof}
 }

The following lemma capturing sufficient conditions is inherent in Sipser's work~\cite{Sipser83}; to obtain the precise constant for our analysis, we provide an alternate simpler argument \shortversion{(proof in Appendix)}.
\begin{lemma}[Sufficient]
	\label{lm:stockexist}
If $\sol{F}\leq 2^{m-2}$, then $\varphi_{stock}^F(m) =1$ 
\end{lemma}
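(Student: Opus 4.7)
The plan is to prove the lemma by the probabilistic method. I would sample $h_1, \ldots, h_m$ independently and uniformly at random from the pairwise independent hash family $\mathcal{H}(n, m, 2)$, and show that with strictly positive probability the random tuple $(h_1, \ldots, h_m)$ satisfies the universally quantified body of $\varphi_{stock}^F(m)$. Any realization with positive probability is a witness for the leading $\exists h_1 \ldots \exists h_m$ block, and so yields $\varphi_{stock}^F(m) = 1$.

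The first step is a per-solution, per-hash bound. Fix one solution $\vecbold{z_1} \in \satisfying{F}$ and one index $i \in [m]$, and call $h_i$ \emph{bad for} $\vecbold{z_1}$ if some other solution $\vecbold{z_2} \in \satisfying{F} \setminus \{\vecbold{z_1}\}$ collides with $\vecbold{z_1}$, i.e. $h_i(\vecbold{z_1}) = h_i(\vecbold{z_2})$. By pairwise independence, each individual collision has probability exactly $2^{-m}$, so a union bound over the at most $\sol{F} - 1$ candidate $\vecbold{z_2}$'s gives
\[
\Pr[h_i \text{ is bad for } \vecbold{z_1}] \;\leq\; \frac{\sol{F}}{2^m} \;\leq\; \frac{2^{m-2}}{2^m} \;=\; \frac{1}{4},
\]
using the hypothesis $\sol{F} \leq 2^{m-2}$.

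The second step combines this with the mutual independence of the $h_i$'s and a final union bound over $\satisfying{F}$. Since the $h_i$'s are drawn independently, the probability that \emph{every} $h_i$ is bad for a single $\vecbold{z_1}$ is at most $(1/4)^m = 4^{-m}$, and hence
\[
\Pr\bigl[\exists \vecbold{z_1} \in \satisfying{F} : \text{every } h_i \text{ is bad for } \vecbold{z_1}\bigr] \;\leq\; \sol{F}\cdot 4^{-m} \;\leq\; 2^{m-2}\cdot 4^{-m} \;=\; 2^{-m-2} \;<\; 1.
\]
So with strictly positive probability, every solution $\vecbold{z_1}$ has some $h_i$ which isolates it among solutions, which is precisely the inner condition of $\varphi_{stock}^F(m)$.

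I do not expect a serious technical obstacle; the proof is essentially an exercise in calibrating constants. The careful points are that 2-wise independence is exactly what is needed for the $2^{-m}$ collision probability (no stronger $k$-wise assumption is used), that the hypothesis $\sol{F} \leq 2^{m-2}$ is tuned so the per-hash failure is at most $1/4$, and that the mutual independence across $i$ converts this into the $4^{-m}$ factor that dominates $\sol{F}$. One small interpretive wrinkle I would flag is the degenerate case $\vecbold{z_1} = \vecbold{z_2}$ in the $\forall \vecbold{z_1} \forall \vecbold{z_2}$ body; I would follow the accompanying prose and read the quantified statement as implicitly restricted to $\vecbold{z_1} \neq \vecbold{z_2}$, so that ``isolating $\vecbold{z_1}$ among other solutions'' is exactly the universally quantified condition being certified.
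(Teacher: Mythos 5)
Your proof is correct and follows essentially the same probabilistic-method argument as the paper: the paper applies Markov's inequality to the expected number of colliders of a fixed solution under a fixed $h_i$ (which is just your union bound over candidate $\vecbold{z_2}$'s in disguise), then uses independence across the $m$ hash functions to get the $4^{-m}$ factor, and finally union-bounds over all solutions, exactly as you do. The constants and conclusion match.
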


\fullversion{
\begin{proof}
For %
$h_i \xleftarrow{R} H(n,m,2)$ and a fixed $y \in \satisfying{F}$, let us define indicator variable $\gamma_{z,y,h_i}$ for all $z \in \satisfying{F} \setminus \{y\}$ as
\begin{align*}
	\gamma_{z,y,h_i} = 
	\begin{cases}
		1 & h_i(z) = h_i(y)\\
		0 & \text{otherwise}
	\end{cases}
\end{align*} 

Since $h_i$ is chosen from pairwise independent 
hash family, we have $\Pr[\gamma_{z,y,h_i} = 1] \leq \frac{1}{2^m}$.  Let $\Gamma_{F,y,h_i} = \sum_{z\in \satisfying{F}\setminus \{y\}} \gamma_{z,y,h_i}$. We have $\expect[\Gamma_{F,y,h_i}] = \sum\limits_{z	\in \satisfying{F} \setminus \{y\}} \expect[\gamma_{z,y,h_i}] = \frac{\sol{F}-1}{2^m} \leq \frac{1}{4}$, since $\sol{F}\leq 2^{m-2}$.
Now, by Markov's inequality, we have 
\begin{align*}
	\Pr[\Gamma_{F,y,h_i} \geq 1 ] \leq \expect[\Gamma_{F,y,h_i}] \leq \frac{1}{4}
\end{align*}

Therefore, we have 
\begin{align*}
	\Pr\left[ \bigwedge_{i=1}^m  \{\Gamma_{F,y,h_i} \geq 1\} \right] \leq \frac{1}{4^m}	
\end{align*} 

Now, taking union bound, we have 
\begin{align*}
	\Pr\left[ \exists y \in \satisfying{F},  \bigwedge_{i=1}^m  \{\Gamma_{F,y,h_i} \geq 1\} \right] \leq \frac{2^{m-2}}{4^m} \leq \frac{1}{2^m}	
\end{align*} 
Taking the complement of the above statement, we have 
\begin{align*}
	\Pr\left[ \forall y \in \satisfying{F},  \bigvee_{i=1}^m  \{\Gamma_{F,y,h_i} < 1\} \right] \geq 1- \frac{1}{2^m} > 0	
\end{align*}

Therefore, we have $\varphi_{stock}^F(m) =1$.
\end{proof}
}
We can now write Stockmeyer's algorithm as follows:

\begin{algorithm}
	\caption{Stock$(F)$}
		\label{algo:stock}
	\begin{algorithmic}[1]
		\State $\val \gets 0$; $\mathsf{hashassgn}_{stock}\gets \emptyset,\Cest\gets 0$;
		\State $F' \gets \mathsf{MakeCopies}(F,\log n)$ \hfill / {$F'$ has $n\log n $ variables}
		\For{m=1 to $n\log n$}
		\State  $(\mathsf{ret},\mathsf{assign})\gets \twoqbfcheck (\varphi_{stock}^{F'}(m)$);\label{line:qbf-call}
		\If{$\mathsf{ret}==1$
		}\label{line:equalcells-check}
		\State $\val=m$;
		\State $\mathsf{hashassgn}_{stock}=\mathsf{assign}$;
		\EndIf
		\State \textbf{break}
		\EndFor
		\State $\Cest=2^{(\frac{\val}{\log (n)})}$;\\
		\Return $(\val, \Cest,\hashstock)$ 
	\end{algorithmic}

\end{algorithm}

Equipped with necessary and sufficient conditions for the satisfiability of $\sol{F}\leq 2^{m-2}$, we now describe Stockmeyer's algorithm, as presented in Algorithm~\ref{algo:stock}.
We first invoke  
$\mathsf{MakeCopies}(F,\log n)$ subroutine, which makes $\log(n)$ many copies of $F$, each with a fresh set of variables and conjuncts them.
Now, the algorithm runs over $m$ from $1$ to number of variables of $F'$, i.e., $n\log n$ and checks if $\varphi_{stock}^{F'}(m)$ evaluates to 1. Each such call is a \twoqbfcheck\  as $\varphi_{stock}^{F'}$ is a $\Sigma_2^P$ formula.

Now, if $\varphi_{stock}^{F'}(i)=1$, then $\varphi_{stock}^{F'}(i+1)=1$ as well, so the above algorithm essentially searches for the first point (smallest value of $m$) where $\varphi_{stock}^{F'}(i)=1$ and returns the estimate of number of solutions at that point as $2^{\frac{\val}{\log n}}$. In other words, when algorithm breaks at line 8, we have $\varphi_{stock}^{F'}(\val)=1$ and $\varphi_{stock}^{F'}(\val-1)=0$.

\begin{theorem}[Stockmeyer]
	Algorithm~\ref{algo:stock} makes O($n \log n$) queries to $\Sigma_2^P$ oracle and solves DAC. 
\end{theorem}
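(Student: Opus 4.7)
The plan is to separately verify the two claims: the $\bigO(n \log n)$ query bound and the DAC guarantee.

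For the query bound, observe that the for loop iterates at most $n \log n$ times (the variable count of the amplified formula $F'$ produced by $\mathsf{MakeCopies}(F, \log n)$), and each iteration makes a single call to $\twoqbfcheck$ on the formula $\varphi_{stock}^{F'}(m)$. Since $\varphi_{stock}^{F'}(m)$ has a $\exists\forall$ prefix, each call is a $\Sigma_2^P$ query, giving the stated bound immediately.

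For correctness, the strategy is to pin down $\val$ using both Claim~\ref{cl:stockup} and Lemma~\ref{lm:stockexist}, and then descend from $F'$ back to $F$ via the identity $\sol{F'} = \sol{F}^{\log n}$. First I would observe a monotonicity property: if $\varphi_{stock}^{F'}(m) = 1$ with witnesses $h_1, \ldots, h_m$, then $\varphi_{stock}^{F'}(m+1) = 1$ (just append any hash function $h_{m+1}$). This, together with the fact that the loop exits at the first $m$ for which $\twoqbfcheck$ returns $1$, ensures that the returned $\val$ is the smallest integer with $\varphi_{stock}^{F'}(\val) = 1$, and moreover $\varphi_{stock}^{F'}(\val-1) = 0$. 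Applying Claim~\ref{cl:stockup} at $m = \val$ yields $\sol{F'} \le \val \cdot 2^{\val}$, while the contrapositive of Lemma~\ref{lm:stockexist} at $m = \val - 1$ yields $\sol{F'} > 2^{\val - 3}$.

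Next I would take $\log n$-th roots, using $\sol{F'} = \sol{F}^{\log n}$, to obtain
\begin{equation*}
2^{(\val-3)/\log n} \;<\; \sol{F} \;\le\; \val^{1/\log n} \cdot 2^{\val/\log n}.
\end{equation*}
Substituting $\Cest = 2^{\val/\log n}$ gives $\Cest \cdot 2^{-3/\log n} < \sol{F} \le \val^{1/\log n} \cdot \Cest$. Finally, I would bound the stray factors by constants using the identities $n^{1/\log n} = 2$ and $(\log n)^{1/\log n} \to 1$: since $\val \le n \log n$, we get $\val^{1/\log n} \le 2 \cdot (\log n)^{1/\log n}$, which is bounded by an absolute constant for all sufficiently large $n$; and $2^{-3/\log n} \ge 1/2$ for $n \ge 2^3$. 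This yields a constant approximation factor $k$, hence Algorithm~\ref{algo:stock} solves DAC.

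The main obstacle is the final step: carefully bounding $\val^{1/\log n}$ by an absolute constant, since naively $\val$ scales with $n \log n$. The key trick is that $n^{1/\log n} = 2$ and the residual $(\log n)^{1/\log n}$ factor is $1 + o(1)$, so the amplification by $\log n$ copies successfully converts the $m \cdot 2^m$ slack in Claim~\ref{cl:stockup} (and the factor $4$ slack in Lemma~\ref{lm:stockexist}) into a constant-factor approximation on $\sol{F}$. Edge cases such as $\sol{F} = 0$ (handled by $\val = 0$) and small $n$ (absorbed into the constant) should be noted briefly.
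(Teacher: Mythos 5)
Your proof is correct and follows essentially the same approach as the paper: apply Claim~\ref{cl:stockup} at $\val$ and the contrapositive of Lemma~\ref{lm:stockexist} at $\val-1$ to sandwich $\sol{F'}$, then use $\sol{F'} = \sol{F}^{\log n}$ to transfer the bound to $F$. The only cosmetic difference is in bounding the stray $\val$ factor: you bound $\val^{1/\log n} \le 2(\log n)^{1/\log n}$ directly, while the paper first relaxes $\val \le n^2$ and then uses $n^2 = 4^{\log n}$ — both routine and equivalent in effect. (One small slip: if $\sol{F}=0$ then $\varphi_{stock}^{F'}(1)$ is vacuously true, so the loop sets $\val=1$, not $\val=0$; neither you nor the paper actually handles the UNSAT case, but it is standard to dispatch it with a preliminary SAT check.)
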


\fullversion{
\begin{proof}
	
From Claim~\ref{cl:stockup}, we have $\sol{F'}\leq \val\cdot 2^\val$. Taking the contrapositive of Lemma~\ref{lm:stockexist} at $\val-1$, we get $\sol{F'}>2^{(\val-1)-2}$. Combining the two, we have $2^{\val-3}\leq  \sol{F'}\leq \val \cdot 2^\val$, i.e., an $8v$-factor approximation.

Now, we can see that this results in a constant-factor approximation for $\sol{F}$, due to having taken $\log(n)$ copies of $F$. First note that $\sol{F'}=(\sol{F})^{\log(n)}$. Second we note that $\val\leq n \log n\leq n^2$ and so $\frac{1}{n^2}\leq \frac{1}{v}$. Now, by above argument we have,

\begin{align*}
  2^{\val-3}\leq \sol{F'}\leq \val\cdot 2^\val\\
  \frac{\sol{F'}}{n^2}\leq  \frac{\sol{F'}}{\val}\leq 2^\val\leq \sol{F'}\cdot 2^{3}\\
       {\left(\frac{\sol{F}}{16}\right)}^{\log(n)}\leq (2^{\frac{\val}{\log(n)}})^{\log(n)}\leq {(\sol{F})}^{\log(n)} \cdot 2^{3}\\
       \frac{\sol{F}}{16} \leq \Cest\leq 16\cdot \sol{F} \ \text{for $n\geq 4$}
\end{align*}

Thus, Algorithm~\ref{algo:stock}  computes a 16-factor approximation of $\sol{F}$ for $n \ge 4$, and makes $\bigO\big(n \log n\big)$ calls to a $\Sigma_2^P$-oracle.
\end{proof}
}
\subsection{Auditing Stockmeyer's algorithm}

Now to check the result/count given by the above algorithm, what can we do? In other words, what could be considered as a {\em certificate} for the answer? If we inspect the above algorithm, we can see that the crux is to identify the smallest value of $m$ for which 2QBFCheck returns True. 

Thus, at $\val$, we need to check that the True answer is correct, i.e., the $\val^{th}$ call to the $\Sigma_2^P$ oracle. This is easy since we can use the certificate returned by the call, i.e., the set of hash functions $h_1, \ldots, h_\val$ that was used, i.e., $\hashstock$. Given this certificate (indeed a solver returns this) we can pass it to the verifier and hence, a query to NP oracle suffices. 

However, this does not suffice. We also need to check that at $\val-1$, the False answer is correct, i.e., at $\val-1$, we need to check that $\varphi_{stock}^{F'}(\val-1)=0$, which can be answered by a query to $\Sigma_2^P$ oracle. More precisely, for $m=\val-1$, we need to check that $\neg \varphi^{F'}_{stock}(\val-1)=1$, i.e., $\forall h_1\ldots \forall h_m$, $\exists \vecbold{z_1}, \exists \vecbold{z_2} \bigwedge_{i=1}^m (F(\vecbold{z_1})\land (h_i(\vecbold{z_1})=h_i(\vecbold{z_2}))\land F(\vecbold{z_2}))$.

\begin{algorithm}
	\caption{StockAudit$(F,\mathsf{Stock}(F))$}
	\begin{algorithmic}[1]
\State {$\mathsf{poscheck}\gets 0$; $\mathsf{negcheck}\gets 0$;}
\State  $\mathsf{poscheck}\gets \conpcheck (\varphi_{stock}^{F'}(\val)[h_1\ldots,h_v \gets \hashstock])$; \label{line:conpcheck}
                \State  $\mathsf{negcheck}\gets \twoqbfcheck(\neg\varphi_{stock}^{F'}(\val-1))$;\label{line:twoqbfcheck}   
		\If{$\mathsf{poscheck}\land \mathsf{negcheck}==1$}\label{line:equalcheck}                            
                \State \Return $\mathsf{Verified}$;
\EndIf
\end{algorithmic}
        \label{algo:stock-audit}
\end{algorithm}

Thus, the complexity is dominated by the ``False'' check at $\val-1$. In fact combining the two checks as done in line 4, we just require a single call to 2-QBF solver. Thus the \emph{audit-complexity}, or the number of variables on which this call is made is $n^2\log^2(n)+ 2 n\log n$. %
Thus we obtain:

\begin{corollary}
\label{cor:stockaudit}
The audit complexity of Algorithm~\ref{algo:stock} is $O(n^2 \log^2 n)$.
\end{corollary}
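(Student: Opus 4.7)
The plan is to count the number of Boolean variables appearing in the single $\Sigma_2^P$ query that Algorithm~\ref{algo:stock-audit} makes on line 4 (the conjunction of the positive and negative checks), and show this is $O(n^2 \log^2 n)$. By the definition of audit complexity fixed in Section~\ref{sec:prelim}, this quantity is exactly the audit complexity of Algorithm~\ref{algo:stock}.

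The dominant contribution comes from the negative check $\neg \varphi_{stock}^{F'}(\val - 1)$, whose propositional support consists of (i) the variables encoding the universally quantified hash functions $h_1, \ldots, h_{\val-1}$, and (ii) the variables encoding the existentially quantified witnesses $\vecbold{z_1}, \vecbold{z_2}$. First, I would recall that $F'$ has $N := n \log n$ variables, and that each hash function $h_i \in \mathcal{H}(N, m, 2)$ can be represented using $O(\max(N, m))$ Boolean variables, as noted after the $k$-wise independent hash family definition. Since $m \leq \val - 1 \leq N = n \log n$, each $h_i$ contributes $O(n \log n)$ variables. Summing over the $\val - 1 \leq n \log n$ hash functions gives an $O(n^2 \log^2 n)$ contribution, while $\vecbold{z_1}, \vecbold{z_2}$ add only $2N = O(n \log n)$ more.

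Next, I would argue that folding in the positive check does not change the asymptotic count. Substituting the supplied certificate $\hashstock$ for the $h_i$ variables in $\varphi_{stock}^{F'}(\val)$ turns that query into a $\coNP$ check on a formula with $O(n \log n)$ free variables (the $\vecbold{z_1}, \vecbold{z_2}$ block). Conjoining this with $\neg \varphi_{stock}^{F'}(\val-1)$ keeps the quantifier prefix at $\forall \exists$ and only adds $O(n \log n)$ variables, which is absorbed in the $O(n^2 \log^2 n)$ bound from the negative check. Thus the combined check is a single $\Sigma_2^P$ query over $O(n^2 \log^2 n)$ variables, as claimed.

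The step needing the most care is really just the variable accounting: making sure the representation cost of a 2-independent hash family over $N = n \log n$ input bits is taken as $O(n \log n)$ per function, and that we multiply by the correct number $\val - 1 = O(n \log n)$ of quantified hash functions. Once these counts are in place, the statement follows immediately, and no new combinatorial or probabilistic argument is required.
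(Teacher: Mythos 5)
Your proposal is correct and follows essentially the same route as the paper's proof: fold the positive and negative checks into a single $\forall^*\exists^*$ query, observe that $F'$ has $n\log n$ variables so each $\vecbold{z_i}$ costs $O(n\log n)$ bits and each 2-wise independent hash function on $n\log n$ input bits costs $O(n\log n)$ bits, and then multiply by the $O(n\log n)$ universally quantified hash functions in the negative check to get the dominant $O(n^2\log^2 n)$ term. The variable accounting and identification of the negative check as the bottleneck are exactly what the paper does.
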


\fullversion{
\begin{proof}
  Algorithm~\ref{algo:stock-audit} gives the audit algorithm for
  Algorithm~\ref{algo:stock}.  The two checks in lines $2$ and $3$ of
  Algorithm~\ref{algo:stock-audit} can be combined into one
  check as follows.

  Let $\varphi_{st\_aud}^{F'}(v)$ be defined as
  \begin{align*}
  \forall \boldsymbol{z_1} \forall \boldsymbol{z_2} \bigvee_{i=1}^v \big(F'(\boldsymbol{z_1}) \wedge (h_i(\boldsymbol{z_1}) = h_i(\boldsymbol{z_2})) \rightarrow \neg F'(\boldsymbol{z_2})\big)[h_1, \ldots h_v \gets \hashstock]\\
 \bigwedge \forall h_1 \ldots \forall h_{v-1} \exists \boldsymbol{z_1} \exists \boldsymbol{z_2} \bigwedge_{i=1}^{v-1} \big(F'(\boldsymbol{z_1}) \wedge (h_i(\boldsymbol{z_1}) = h_i(\boldsymbol{z_2})) \wedge F'(\boldsymbol{z_2})\big)~~~~~~~~~~~~~~~~~~~~~~~~~~~~
  \end{align*}
\normalsize
Recalling that $\forall h_i$ is simply syntactic sugar for quantifying
over coefficients of the hash functions, the above formula is a formula
with quantifier prefix $\forall^* \exists^*$ and its truth can be
decided by one invocation of a $\Sigma_2^P$ oracle.

To determine the audit complexity, notice that each $\boldsymbol{z_i}$
in the above formula represents $n \log n$ bits, since $F'$ has a
support of size $n \log n$.  Similarly, each $h_i$ is a hash function
from $\mathcal{H}(n\log n, v-1, 2)$, where $v \le n \log n$. Since
$v \le n \log n$, by our discussion in Section~\ref{sec:prelim}, each
such hash function can be represented using $\bigO(n \log n)$ bits.
Therefore, the total number of Boolean variables in the quantifier
prefix of $\varphi_{staud}^{F'}(v)$ is in $\bigO\big( (n \log n)^2 +
n\log n\big) = \bigO(n^2 \log^2 n)$.
\end{proof}
}

\section{Trading Algorithm vs Audit Complexity}
\label{sec:bgp}
Observe that the dominant factor of $n^2\log n$ in the audit complexity of Stockmeyer's algorithm was the need to certify ``False'' returned by 2QBFCheck. Therefore, a plausible approach to improving the audit complexity would be to develop an algorithm whose audit does not have to rely on certifying any of the ``False'' values returned by the underlying QBF checker.  To this end, we construct a new formula $\varphi$ whose validity would imply both an lower and upper bound (without requires a separate check for False as in the previous case). At a high-level, our approach is to partition the solution space into cells and then assert that every cell has neither too many solutions nor too few solutions. 

\subsubsection*{Not Too Many Solutions}

For a given input Boolean formula $F$ over $n$ variables, consider the formula:
\begin{align*}
  &\exists h \in \mathcal{H}(n,m,n) \, \forall \vecbold{\alpha} \in \{0,1\}^m, \, \vecbold{y_1}, \vecbold{y_2}, \ldots \vecbold{y_{u+1}}\in \{0,1\}^n\\
&\mathsf{ConstructNotMany}(F,h, m, \vecbold{\alpha}, \vecbold{y_1},\vecbold{y_2}, \ldots \vecbold{y_{u+1}})\\
&\text{\it where, }
\mathsf{ConstructNotMany}(F,h, m, \vecbold{\alpha}, \vecbold{y_1},\vecbold{y_2}, \ldots \vecbold{y_{u+1}}):=\\
&  \left(\bigwedge_{i=1}^{u+1} (F(\vecbold{y_i}) \wedge h(\vecbold{y_i}) = \vecbold{\alpha}) \rightarrow  \bigvee_{i=1}^{u} (\vecbold{y_{u+1}} = \vecbold{y_i}) \right)\nonumber
\end{align*}

The above formula encodes that there is a $n$-wise indepedent hash function such that for each $m$-sized cell $\vecbold{\alpha}$, and every set of $u+1$ solutions (each solution being an assignment to $n$ variables), if all of them are mapped to the same cell, then two of them must be identical. So there must be at most $u$ solutions in each cell. 

\subsubsection*{At Least Few Solutions}
Next, let us consider the formula
\begin{align*}
&\exists h \in \mathcal{H}(n,m,n) \, \forall \vecbold{\alpha} \in \{0,1\}^{m} \, \exists \vecbold{z_1}, \vecbold{z_2}, \ldots \vecbold{z_{\ell}}\in\{0,1\}^n\\
&\mathsf{ConstructAtLeastFew}(F, h, m, \vecbold{\alpha},\vecbold{z_1},\vecbold{z_2}, \ldots \vecbold{z_{\ell}})\\
&\text{\it where, }
  \mathsf{ConstructAtLeastFew}(F, h, m, \vecbold{\alpha},\vecbold{z_1},\vecbold{z_2}, \ldots \vecbold{z_{\ell}}):=\\
&  \left(\bigwedge_{i=1}^{\ell} (F(\vecbold{z_i}) \wedge h(\vecbold{z_i}) = \vecbold{\alpha} \wedge \bigwedge_{i=1,j=i+1}^{\ell} ({z_i} \neq \vecbold{z_j})
  \right) 
  	\nonumber
\end{align*}

This formula says that there is a hash function such that for each cell $\vecbold{\alpha}$, there are $\ell$ solutions which all map to the cell $\vecbold{\alpha}$. Hence, if they are all distinct, in each cell the number of solutions is at least $\ell$. Thus the number of solutions is bounded below by $\ell$ times number of cells.

Combining these, for any Boolean formula $F$, parameters $\ell, u$, we obtain a single formula asserting existence of a hash function such that each cell has a number of solutions that is simultaneously bounded from above by $u$ and below by $\ell$.%
\begin{align*}
 & \varphi^{\langle F, \ell, u \rangle}_{Cells}(m):=\exists h \forall \vecbold{\alpha}, \vecbold{y_1}, \vecbold{y_2}, \ldots \vecbold{y_{u+1}} \exists \vecbold{z_1}, \vecbold{z_2}, \ldots \vecbold{z_{\ell}} \\ 
  &\big( \mathsf{ConstructNotMany}(F,h, m, \vecbold{\alpha}, \vecbold{y_1}, \vecbold{y_2}, \ldots \vecbold{y_{u+1}})\nonumber  \\
  & \wedge \mathsf{ConstructAtLeastFew}(F, h, m, \vecbold{\alpha}, \vecbold{z_1}, \vecbold{z_2}, \ldots \vecbold{z_{\ell}})\big)
\end{align*}
Note that %
the formula has a $\exists\forall\exists$ alternation, i.e., it is a $\Sigma_3^P$ formula. Then,
\begin{proposition}
\label{prop:bgp}
If $\varphi^{\langle F,\ell,u\rangle}_{Cells}(m)=1$, then $\ell\cdot 2^m \leq \sol{F}\leq u\cdot 2^m$.
\end{proposition}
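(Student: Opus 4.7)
The plan is to extract the semantic content of the two conjuncts and combine them with the trivial observation that $h$ induces a partition of $\satisfying{F}$ into $2^m$ cells. I would begin by fixing a hash function $h \in \mathcal{H}(n,m,n)$ that witnesses the outermost $\exists h$ in $\varphi^{\langle F,\ell,u\rangle}_{Cells}(m)$. With $h$ fixed, both inner quantified sub-formulas must hold for every cell label $\vecbold{\alpha}\in\{0,1\}^m$, and I will analyze each separately.

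For the per-cell upper bound, I would read $\mathsf{ConstructNotMany}$ as asserting that for every cell label $\vecbold{\alpha}$, any $u+1$ satisfying assignments that all hash to $\vecbold{\alpha}$ must contain a repetition. By a pigeonhole argument, if $|\Cell{F}{h,\vecbold{\alpha}}|\geq u+1$, then one could instantiate $\vecbold{y_1},\dots,\vecbold{y_{u+1}}$ with $u+1$ pairwise distinct elements of that cell; the hypothesis of the implication would then hold while its conclusion would fail, contradicting the universally quantified conjunct. Hence $|\Cell{F}{h,\vecbold{\alpha}}|\leq u$ for every $\vecbold{\alpha}$.

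For the per-cell lower bound, I would read $\mathsf{ConstructAtLeastFew}$ as directly exhibiting, for every $\vecbold{\alpha}$, a tuple $(\vecbold{z_1},\dots,\vecbold{z_\ell})$ of pairwise distinct (thanks to the $\bigwedge_{i<j}(\vecbold{z_i}\neq\vecbold{z_j})$ clause) satisfying assignments of $F$, each hashing to $\vecbold{\alpha}$. Thus $|\Cell{F}{h,\vecbold{\alpha}}|\geq \ell$ for every $\vecbold{\alpha}$.

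To finish, I would observe that the cells $\{\Cell{F}{h,\vecbold{\alpha}}\}_{\vecbold{\alpha}\in\{0,1\}^m}$ partition $\satisfying{F}$, because $h$ is a (total, single-valued) function on $\{0,1\}^n$. Summing the per-cell inequalities over all $2^m$ cells gives
\[
\ell\cdot 2^m \;\leq\; \sum_{\vecbold{\alpha}\in\{0,1\}^m}|\Cell{F}{h,\vecbold{\alpha}}| \;=\; \sol{F} \;\leq\; u\cdot 2^m,
\]
which is the claimed bound. There is no real obstacle here; the only step worth writing carefully is the pigeonhole-style contrapositive for the upper bound, since $\mathsf{ConstructNotMany}$ is phrased as an implication rather than a direct cardinality statement. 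The rest amounts to unpacking syntax and using that $h$ is a function.
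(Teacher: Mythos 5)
Your proof is correct and takes essentially the same approach as the paper: the paper's proof merely asserts that each conjunct of $\varphi^{\langle F,\ell,u\rangle}_{Cells}(m)$ implies the corresponding bound ``from the definition,'' while you spell out the pigeonhole contrapositive for $\mathsf{ConstructNotMany}$, the direct distinctness reading of $\mathsf{ConstructAtLeastFew}$, and the summation over the $2^m$ cells of the partition induced by $h$ — exactly the reasoning the paper leaves implicit.
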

\fullversion{
\begin{proof}
The proof follows from the definition of $\varphi^{\langle
F,\ell,u\rangle}_{Cells}(m)$. Specifically,
\begin{align*}
\exists h \in \mathcal{H}(n,m,n) \, \forall \vecbold{\alpha} \in \{0,1\}^m, \, \vecbold{y_1}, \vecbold{y_2}, \ldots \vecbold{y_{u+1}}\in \{0,1\}^n\\
\mathsf{ConstructNotMany}(F,h, m, \vecbold{\alpha}, \vecbold{y_1},\vecbold{y_2}, \ldots \vecbold{y_{u+1}})
\end{align*}
is true implies that $\sol{F} \le u.2^m$. Similarly, 
\begin{align*}
\exists h \in \mathcal{H}(n,m,n) \, \forall \vecbold{\alpha} \in \{0,1\}^{m} \, \exists \vecbold{z_1}, \vecbold{z_2}, \ldots \vecbold{z_{\ell}}\in\{0,1\}^n\\
\mathsf{ConstructAtLeastFew}(F, h, m, \vecbold{\alpha},\vecbold{z_1},\vecbold{z_2}, \ldots \vecbold{z_{\ell}})
\end{align*}
is true implies $\sol{F} \ge l.2^m$.
\end{proof}
}
Assuming that it is possible to show that such a hash function $h$ with polynomial time evaluation and polynomial size description indeed exists (we will do so in the proof below), we now design an algorithm for approximate model counting. The algorithm simply goes over every value of $m$ from 1 to $n$ and checks whether  $ \varphi^{\langle F, \ell, u\rangle}_{Cells}(m)$ is True. We present the pseudocode of the algorithm in Algorithm~\ref{algo:smallcells}. 
\begin{algorithm}[h]
	\caption{$\mathsf{EqualCellsCounter}(F)$}
	\label{algo:smallcells}
	\begin{algorithmic}[1]
		\State $u \gets 16384n$;
		\State $\ell \gets 1024n$;
                \State $\hashcells\gets 0;\Cest\gets 0$;
		\For{$m=1$ to $n$}
               	\State $\mathsf{(ret,assign)} \gets \threeqbfcheck(\varphi^{\langle F,\ell,u\rangle}_{Cells}(m))$; 
		\If{$\mathsf{ret}$==1}\label{line:eqcells-check}
                \State $\hashcells=\mathsf{assign}$;
		\EndIf
                \State \textbf{break}
		\EndFor
                \State $\Cest=\ell\cdot 2^m$;\\
                \Return $(\Cest,\hashcells)$ %
	\end{algorithmic}
\end{algorithm}

\begin{theorem}
\label{thm:intermediate}
Algorithm~\ref{algo:smallcells} solves DAC and makes $O(n)$ calls to a $\Sigma_3^P$ oracle. 
\end{theorem}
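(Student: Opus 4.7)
The plan is to dispatch the easy complexity bookkeeping first, then reduce correctness to an existence statement about hash functions, and finally prove that statement by the probabilistic method. For complexity, the formula $\varphi^{\langle F, \ell, u\rangle}_{Cells}(m)$ has quantifier prefix $\exists h\,\forall \vecbold{\alpha}, \vecbold{y}_1, \ldots, \vecbold{y}_{u+1}\,\exists \vecbold{z}_1, \ldots, \vecbold{z}_\ell$, so deciding it lies in $\Sigma_3^P$; the for-loop performs one such decision per value of $m \in \{1, \ldots, n\}$, giving $O(n)$ oracle calls.

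For the approximation guarantee, suppose the loop breaks at $m = v$ with $\mathsf{ret} = 1$. Proposition~\ref{prop:bgp} yields $\ell \cdot 2^v \leq \sol{F} \leq u \cdot 2^v$, so $\Cest = \ell \cdot 2^v$ satisfies $\Cest \leq \sol{F}$ and $\Cest \geq (\ell/u)\sol{F} = \sol{F}/16$. This is a $16$-factor approximation, which suffices for DAC as discussed in the preliminaries. The real task is therefore to show that for every input $F$, there exists some $m \in \{1, \ldots, n\}$ with $\varphi^{\langle F, \ell, u\rangle}_{Cells}(m) = 1$ (the corner case $\sol{F} < 2\ell$ can be dispatched separately, since such tiny counts admit a brute-force certificate).

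To prove existence of such an $m$, let $N = \sol{F}$ and pick $m$ so that the per-cell mean $\mu := N/2^m$ lies near the geometric mean $\sqrt{\ell u} = 4096\,n$; since $u/\ell = 16$, the admissible window $[\log_2(N/u), \log_2(N/\ell)]$ has length $4$ and contains an integer $m \in \{1, \ldots, n\}$ whenever $N$ is in the intended range. I would then sample $h$ uniformly from the $n$-wise independent family $\mathcal{H}(n, m, n)$, and for each $\vecbold{\alpha} \in \{0,1\}^m$ let $Z_{\vecbold{\alpha}}$ denote the number of models of $F$ that $h$ sends to $\vecbold{\alpha}$. Each $Z_{\vecbold{\alpha}}$ is a sum of $n$-wise independent $\{0,1\}$ indicators of mean $\mu$, so the BR1994 bound with $t = n$ and $\varepsilon = 3/4$ gives $\Pr[|Z_{\vecbold{\alpha}} - \mu| \geq \varepsilon \mu] \leq 8\bigl((t\mu + t^2)/(\varepsilon^2 \mu^2)\bigr)^{t/2}$. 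Plugging in $\mu = \Theta(n)$ and $t = n$ makes the base of the exponent a small constant, so after a union bound over the at most $2^n$ cells the total failure probability is strictly below $1$. The probabilistic method then produces a concrete $h$ such that every cell contains between $\ell$ and $u$ models of $F$, witnessing $\varphi^{\langle F, \ell, u\rangle}_{Cells}(m) = 1$; the polynomial-size description of $h$ noted in Section~\ref{sec:prelim} makes it a legal value for the leading $\exists h$ of the $\Sigma_3^P$ formula.

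The main obstacle I anticipate is the constant-chasing in the tail bound: one must verify that the algorithm's choices $\ell = 1024n$ and $u = 16384n$, combined with $t = n$, really make the BR1994 estimate beat the $2^n$ union-bound factor (and that the choice $\varepsilon = 3/4$ keeps both $\mu(1-\varepsilon)$ above $\ell$ and $\mu(1+\varepsilon)$ below $u$). A smaller subtlety is a clean treatment of the tiny-$\sol{F}$ case, where no integer $m \in [1, n]$ places $\mu$ in $[\ell, u]$; this is handled by observing that for such small counts the answer can be certified by brute-force enumeration, or by a minor adjustment at the lower end of the loop.
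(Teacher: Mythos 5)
Your proposal follows essentially the same route as the paper: dispose of the oracle-call count directly, reduce correctness to Proposition~\ref{prop:bgp}, and establish existence of a witnessing $h$ by the probabilistic method using $n$-wise independence and the Bellare--Rompel concentration bound, followed by a union bound over the $2^m$ cells. The paper uses $\varepsilon = 1/2$ (with $m = \lfloor \log \sol{F} - 12 - \log n \rfloor$ so that $\mu_m \in [4096n, 8192n]$) rather than your $\varepsilon = 3/4$, but your constants also check out: with $\mu \geq 4096n$ the base of the exponent is at most $\tfrac{16}{9}\cdot\tfrac{4097}{4096^2} < \tfrac{1}{2000}$, which comfortably beats the $2^n$ union-bound factor. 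Your explicit flag of the small-$\sol{F}$ corner case (where no integer $m \geq 1$ puts $\mu$ in $[\ell,u]$) is a point the paper's proof silently glosses over, so you are in fact slightly more careful there.
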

\shortversion{\begin{proof}
First observe that by definition of $\varphi^{\langle F,\ell, u\rangle}_{Cells}(m)$, and Proposition~\ref{prop:bgp}, when the check in line~\ref{line:eqcells-check} in Algorithm~\ref{algo:smallcells} passes, we must have
$\ell \cdot 2^m \leq \sol{F} \leq u \cdot 2^m$. 
Since $\frac{u}{\ell} = 16$, we obtain a 16-factor approximation, i.e.,
$\frac{\sol{F}}{16}\leq \Cest \leq \sol{F}$.
Further, observe that we make at most $n$ calls to  ${\Sigma_3^P}$ oracle.

It remains to show that there indeed exists $h$ such that with the choice of $\ell$ and $u$ as above, for some $m \in [n]$, the check in line~\ref{line:eqcells-check} will pass. To this end, we will utilize the probabilistic method. 

Let us fix a cell $\vecbold{\alpha}$. Let $m= \lfloor \log \sol{F} - 12 - \log n \rfloor$. For $\vecbold{y} \in \{0,1\}^n$, define the indicator variable $\gamma_{\vecbold{y},\vecbold{\alpha}}$ such that 
$\gamma_{\vecbold{y},\vecbold{\alpha}} = \begin{cases}
	1 & h(\vecbold{y}) = \vecbold{\alpha} \\
	0 & \text{otherwise}
\end{cases}$. 
Therefore, $\mu_{m} = \sum\limits_{\vecbold{y}	\in \satisfying{F}} \expect[\gamma_{\vecbold{y},\vecbold{\alpha}}] = \frac{|\satisfying{F}|}{2^m}$. Substituting value of $m$, we have $\mu_{m} \in [4096n,8192n] $. Now consider $h\xleftarrow{R}\mathcal{H}(n,m,n)$ i.e., chosen randomly from a $n$-wise independent hash family. Then $\{\gamma_{\vecbold{y},\vecbold{\alpha}}\}$ are $n$-wise independent,
\begin{align*}
	&\Pr[\Cnt{F}{h,\vecbold{\alpha}} < 1024n \text{ or } \Cnt{F}{h,\vecbold{\alpha}} > 16384n]\\ 
	&\leq 	\Pr\left[ \left|\Cnt{F}{h,\vecbold{\alpha}} - \mu_{m} \right| \geq \frac{1}{2} \mu_{m} \right] 
	\leq 8 \cdot \left( \frac{4(n\mu_{m} + n^2)}{\mu_{m}^2} \right)^{n/2}\\
	&\leq 8 \cdot \left(\frac{4*4097}{4096*4096}\right)^{n/2} \leq 8 \times 31^{-n} 
\end{align*}
Thus, $\Pr[\exists \vecbold{\alpha}, \Cnt{F}{h,\vecbold{\alpha}} < 1024n \text{ or }  \Cnt{F}{h,\vecbold{\alpha}} >  16384n] \leq 8 \cdot \left(\frac{2}{31}\right)^{n}<1$, for $n\geq 1$, which implies $\Pr[\forall \vecbold{\alpha}, \Cnt{F}{h,\vecbold{\alpha}} > 1024n \text{ and }  \Cnt{F}{h,\vecbold{\alpha}} \leq  16384n] >0$. 
Since the probability space is defined over random choice of $h$ and  therefore, there must exist an $h$ such that we  have $\exists h \forall \vecbold{\alpha}, \Cnt{F}{h,\vecbold{\alpha}} > 1024n \text{ and }  \Cnt{F}{h,\vecbold{\alpha}} \leq  16384n$. Furthermore, observe $\ell = 1024n$ and $u = 16384n$, we have that $\varphi^{\langle F, \ell, u \rangle}_{Cells}(m)$ is True. 
\end{proof}
}
\fullversion{
\begin{proof}
First observe that by definition of $\varphi^{\langle F,\ell, u\rangle}_{Cells}(m)$, and Proposition~\ref{prop:bgp}, when the check in line~\ref{line:eqcells-check} in Algorithm~\ref{algo:smallcells} passes, we must have
$\ell \cdot 2^m \leq \sol{F} \leq u \cdot 2^m$. 
Since $\frac{u}{\ell} = 16$, we obtain a 16-factor approximation, i.e.,
$\frac{\sol{F}}{16}\leq \Cest \leq \sol{F}$.
Further, observe that we make at most $n$ calls to  ${\Sigma_3^P}$ oracle.

It remains to show that there indeed exists $h$ such that with the
choice of $\ell$ and $u$ as above, for some $m \in [n]$, the check in
line~\ref{line:eqcells-check} will pass. To this end, we will utilize
the probabilistic method and show that for some $m \in \{1, \ldots
n\}$, the probability that there exists a hash function
$h \in \mathcal{H}(n,m,n)$ that satisfies $\varphi^{\langle F,\ell,
u\rangle}_{Cells}(m)$ is $> 1$.  .

Let us fix a cell $\vecbold{\alpha}$. Let $m= \lfloor \log \sol{F} -
12 - \log n \rfloor$. For $\vecbold{y} \in \{0,1\}^n$.  Now, define
the indicator variable $\gamma_{\vecbold{y},\vecbold{\alpha}}$ such
that
$\gamma_{\vecbold{y},\vecbold{\alpha}} = \begin{cases}
	1 & h(\vecbold{y}) = \vecbold{\alpha} \\
	0 & \text{otherwise}
\end{cases}$. 
Therefore, $\mu_{m}
= \sum\limits_{\vecbold{y} \in \satisfying{F}} \expect[\gamma_{\vecbold{y},\vecbold{\alpha}}]
= \frac{|\satisfying{F}|}{2^m}$. Recalling that
$m= \lfloor \log \sol{F} - 12 - \log n \rfloor$, we have $\mu_{m} \in
[4096n,8192n] $. Now consider $h\xleftarrow{R}\mathcal{H}(n,m,n)$
i.e., chosen randomly from a $n$-wise independent hash family. Then
$\{\gamma_{\vecbold{y},\vecbold{\alpha}}\}$ are $n$-wise independent.
Assuming $n \ge 4$, we have
\begin{align*}
	&\Pr[\Cnt{F}{h,\vecbold{\alpha}} < 1024n \text{ or } \Cnt{F}{h,\vecbold{\alpha}} > 16384n]\\ 
	&\leq 	\Pr\left[ \left|\Cnt{F}{h,\vecbold{\alpha}} - \mu_{m} \right| \geq \frac{1}{2} \mu_{m} \right] 
	\leq 8 \cdot \left( \frac{4(n\mu_{m} + n^2)}{\mu_{m}^2} \right)^{n/2}\\
	&\leq 8 \cdot \left(\frac{4*4097}{4096*4096}\right)^{n/2} \leq 8 \times 31^{-n} 
\end{align*}

In the above derivation, we note that we have made use of Proposition 2.2 stated
at the end of Section~\ref{sec:prelim}.

Thus, $\Pr[\exists \vecbold{\alpha}, \Cnt{F}{h,\vecbold{\alpha}} < 1024n \text{ or }  \Cnt{F}{h,\vecbold{\alpha}} >  16384n] \leq 8 \cdot \left(\frac{2}{31}\right)^{n}<1$, for $n\geq 1$, which implies $\Pr[\forall \vecbold{\alpha}, \Cnt{F}{h,\vecbold{\alpha}} > 1024n \text{ and }  \Cnt{F}{h,\vecbold{\alpha}} \leq  16384n] >0$. 
Since the probability space is defined over random choice of $h$ and  therefore, there must exist an $h$ such that we  have $\exists h \forall \vecbold{\alpha}, \Cnt{F}{h,\vecbold{\alpha}} > 1024n \text{ and }  \Cnt{F}{h,\vecbold{\alpha}} \leq  16384n$. Furthermore, observe $\ell = 1024n$ and $u = 16384n$, we have that $\varphi^{\langle F, \ell, u \rangle}_{Cells}(m)$ is True. 

\end{proof}
}

\subsection{Auditing $\mathsf{EqualCellsCounter}(F)$}

Unlike for Stockmeyer's algorithm, for Algorithm~\ref{algo:smallcells}, a single call to a $\Sigma_2^P$ oracle is sufficient for verification and this call is intrinsically different from the earlier algorithm.

\begin{algorithm}
	\caption{EqualCellsAudit$(F,\mathsf{EqualCellsCounter}(F))$}
	\begin{algorithmic}[1]
          \State $\mathsf{ck}\gets 0$; 
          $m=\lfloor{\log (\frac{\Cest}{n})\rfloor}-10$; $\ell \gets 1024n$; $u\gets 16384n$;
                 \State  $\mathsf{ck}\gets \twoqbfcheck (\varphi_{Cells}^{\langle F, \ell, u\rangle}(m)[h \gets \hashcells])$; \label{line:smallcellscheck}
		\If{$\mathsf{ck}==1$}\label{line:equalcheck}                            
                \State \Return $\mathsf{Verified}$;
\EndIf
\end{algorithmic}
        \label{algo:small-cells-audit}
\end{algorithm}

That is, we substitute in the formula $\varphi^{\langle F,\ell,u\rangle}_{Cells}$ the value returned by Algorithm~\ref{algo:smallcells} and the hash functions $\hashcells$ passed as witness and check if the resulting oracle call gives True. Now, examining the audit complexity, we can see that the number of variables on which the query in Algorithm~\ref{algo:small-cells-audit} is bounded by $n(2n+2)$: $n$ variables for $\vecbold{\alpha}$ (since $m\leq n$), $n^2+n$ for $u+1$ many $\vecbold{y_i}$'s and another $n^2$ for $\ell$ many $\vecbold{z_i}$'s, since $u$ and $\ell$ can at most be $n$. Thus, we get%
\begin{corollary}
	The audit complexity of Algorithm~\ref{algo:smallcells} is $O(n^2)$.
\end{corollary}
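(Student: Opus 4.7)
The plan is to analyze Algorithm~\ref{algo:small-cells-audit} directly and show it makes a single $\Sigmatwop$ query on a formula with $\bigO(n^2)$ variables. The key observation is that the certificate $\hashcells$ returned by $\mathsf{EqualCellsCounter}$ is precisely a witness for the outermost existential quantifier in $\varphi^{\langle F, \ell, u\rangle}_{Cells}(m)$. Substituting this witness eliminates the leading $\exists h$, collapsing the original $\exists\forall\exists$ ($\Sigma_3^P$) formula into a $\forall\exists$ ($\Sigmatwop$) formula, which matches the complexity class permitted by our audit setup with $t=1$.

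Next I would verify that the $m$ chosen inside the audit is consistent with the one used by the counter. Since $\mathsf{EqualCellsCounter}$ sets $\Cest = \ell \cdot 2^{m_{\mathrm{alg}}}$ with $\ell = 1024 n$, we have $\lfloor \log(\Cest/n) \rfloor = 10 + m_{\mathrm{alg}}$, so the audit's choice $m = \lfloor \log(\Cest/n) \rfloor - 10$ recovers exactly $m_{\mathrm{alg}}$. If the $\twoqbfcheck$ call at line~\ref{line:smallcellscheck} returns True, then $\varphi^{\langle F, \ell, u\rangle}_{Cells}(m_{\mathrm{alg}})$ is True, and Proposition~\ref{prop:bgp} yields $\ell \cdot 2^{m_{\mathrm{alg}}} \leq \sol{F} \leq u \cdot 2^{m_{\mathrm{alg}}}$; combined with $\Cest = \ell \cdot 2^{m_{\mathrm{alg}}}$ and $u/\ell = 16$, this certifies a 16-factor approximation.

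Finally I would count the Boolean variables in the quantifier prefix of $\varphi^{\langle F,\ell,u\rangle}_{Cells}(m)[h \gets \hashcells]$: (i) $\vecbold{\alpha}$ contributes $m \leq n$ bits; (ii) the $u+1$ vectors $\vecbold{y_i}$ contribute $(u+1)\cdot n = \bigO(n^2)$ bits, since $u = 16384 n$; (iii) the $\ell$ vectors $\vecbold{z_i}$ contribute $\ell \cdot n = 1024 n^2 = \bigO(n^2)$ bits. Summing gives $\bigO(n^2)$ variables in the quantifier prefix, establishing the claimed audit complexity.

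I anticipate the only subtle point is to confirm that the substitution $[h \gets \hashcells]$ does not itself blow up the formula size: this relies on the fact, noted in Section~\ref{sec:prelim}, that any $h \in \mathcal{H}(n,m,n)$ admits a description of size $\bigO(n \cdot \max(n,m)) = \bigO(n^2)$, so $\hashcells$ is a polynomial-size string that can be hard-coded into the audit query without altering the asymptotic variable count. Everything else is a straightforward bookkeeping argument, so I expect no further obstacles.
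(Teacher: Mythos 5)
Your proof is correct and takes essentially the same approach as the paper: substitute the certified hash function $\hashcells$ to collapse the $\Sigma_3^P$ formula to a $\forall\exists$ formula, then count the Boolean variables in the remaining quantifier prefix ($m \le n$ for $\vecbold{\alpha}$, $(u+1+\ell)\cdot n = \bigO(n^2)$ for the $\vecbold{y_i}$'s and $\vecbold{z_i}$'s). The added checks you include — that $m$ is recovered correctly from $\Cest$ and that hard-coding $\hashcells$ does not inflate the formula — are sound and make the argument slightly more self-contained than the paper's version, but do not change the route.
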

\fullversion{
\begin{proof}
 The audit formula to be checked for validity is  $\varphi^{\langle F, \ell, u \rangle}_{Cells}(m)$ after $m$ is substituted by $\lfloor{\log (\frac{\Cest}{n})\rfloor}-10$ and after $h$ is substituted by $\hashcells$.  Thus, we need $m$ Boolean variables
 for $\boldsymbol{\alpha}$ and $(\ell + u + 1)\cdot n$ Boolean variables for
 $\boldsymbol{y_1} \ldots \boldsymbol{y_{u+1}}$, $\boldsymbol{z_1}, \ldots \boldsymbol{z_\ell}$.  Since $u$ and $\ell$ are in $\bigO\big(n\big)$, and since $m \le n$, we get
 the audit complexity as $\bigO(n^2)$.
\end{proof}
}

\section{Improving the Audit Complexity}
\label{sec:combined}
We now focus on further improving the audit complexity. We start by observing that the factor $n^2$ in the audit complexity of Algorithm~\ref{algo:smallcells} was due to the size of cell being $O(n)$ wherein every solution requires $n$ variables to represent, and therefore, contributing to the factor of $n^2$. To this end, we ask if we can have cells of constant size; achieving this goal has potential to reduce the audit complexity to $O(n)$.

However, as we will see, having cells of size 1 suffices only for an $n$-factor approximation of the count. To achieve constant factor approximation, we rely on the standard technique of making multiple copies (as used in Section~\ref{sec:stock}), which will finally achieve an audit complexity of $O(n \log n)$. 

Taking a step back, recall that Stockmeyer's approach too relied on cells of size 1; after all, $\varphi_{stock}^{F}(m) $ asserted that for every solution $\vecbold{z}$, there is a hash function $h$ that maps $\vecbold{z}$ to a cell where it is the only solution. But the higher audit complexity of Stockmeyer's algorithm was due to the need to certify the ``False'' answer. So, in order to achieve best of both worlds: i.e., have cells of size 1 and better audit complexity, we construct a new formula $\varphi_{holes}^{F}(m) $ that operates on $F$ and evaluates to True for sufficiently many values of $m$ for which $\varphi_{stock}^{F}(m)$ would evaluate to False, in the hope that certifying the "True" answer of $\varphi_{holes}^{F}(m)$ would be easier: it indeed turns out to be true, as we will see later. 

\begin{align*}
	\varphi_{holes}^{F}(m) :=  \exists &h_1, h_2, \ldots h_{m},h_{m+1} \\ 
	&\forall \,\, \vecbold{\alpha} \,\, \exists \vecbold{z}  \left(\bigvee_{i} F(\vecbold{z}) \wedge h_i(\vecbold{z}) = \vecbold{\alpha} \right)	
\end{align*}

Essentially, the above formula is True if and only if there are $m+1$ hash functions such that for
every cell $\vecbold{\alpha}$, there is a solution that is mapped to the cell by some hash function.

We now turn to the design of an approximate counter, called {\toolname}, with a much improved audit complexity. 
Given an input formula $F$, we will first construct $F'$ by making $\log n$ copies. We can design an approximate counting algorithm that queries $\varphi_{stock}^{F'}(m)$ as well as $\varphi_{holes}^{F'}(m)$ for all values of $m$ and generate a certificate that consists of (1) smallest value of $m$ for which $\varphi_{stock}^{F'}(m)$  evaluates to True, denoted $c_{high}$, and (2) the largest value of $m$ for which $\varphi_{holes}^{F'}(m)$ evaluates to True, denoted $c_{low}$. A surprising aspect of our algorithm design is that we are able to compute the estimate of $\Cest$ without $c_{low}$, i.e., purely based on the $c_{high}$, but the certificate returned by {\toolname} uses $c_{low}$ and associated set of hash functions. 

\begin{algorithm}[h]
	\caption{$\toolname(F)$}
	\label{algo:pigeons}
	
	\begin{algorithmic}[1]
		\State $ \clow\gets 0 ; \chigh \gets n' ; \quad  n' \gets n \cdot \log n$; $\Cest\gets 0$;
		\State $F' \gets \mathsf{MakeCopies}(F,\log n)$
		\State $ \mathsf{hashassgn_{stock}} \gets \emptyset; \quad  \mathsf{hashassgn_{holes}} \gets \emptyset$
		\For{$m=1$ to $n'$}
		\State $(\mathsf{assign},\mathsf{ret}) \gets \threeqbfcheck(\varphi^{F'}_{holes}(m))$ \label{line:low-qbf-check}
		\If{$\mathsf{ret}$ == 0}\label{line:low-check}
		\State $\clow = m-1$
		\State \textbf{break}
		\EndIf
		\State $ \mathsf{hashassgn_{holes}} \gets \mathsf{assign}$
		\EndFor 
		
		\For{$m'=1$ to $n'$}
		\State $(\mathsf{assign}, \mathsf{ret}) \gets \twoqbfcheck (\varphi^{F'}_{stock}(m'))	$ \label{line:high-qbf-check}
		\If{$\mathsf{ret}$==1}\label{line:high-check}
		\State $\chigh = m'$
		\State $\mathsf{hashassgn_{stock}} \gets \mathsf{assign}$ 
\State \textbf{break}
\EndIf
\EndFor
\State $\Cest\gets 2^{\frac{c_\mathsf{high}}{\log n}}$  
		\State \Return ($\Cest, \clow,\chigh,\mathsf{hashassgn_{stock}}, \mathsf{hashassgn_{holes}}$)
	\end{algorithmic}

\end{algorithm}

\begin{theorem}
	Algorithm~\ref{algo:pigeons} solves DAC and makes $O(n \log n)$ calls to $\Sigma_3^P$ oracle.
\end{theorem}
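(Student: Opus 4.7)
My plan is to reduce the correctness argument for Algorithm~\ref{algo:pigeons} to Stockmeyer's analysis applied to the amplified formula $F'$. The central observation is that the returned estimate $\Cest = 2^{\chigh/\log n}$ depends only on $\chigh$; the quantities $\clow$, $\hashstock$ and $\hashcells$ are assembled purely as a certificate for the auditor and play no role in establishing the DAC approximation guarantee. So the proof splits naturally into (i) bounding the total number of $\Sigma_3^P$ queries and (ii) showing $\Cest$ lies within a constant factor of $\sol{F}$.

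For (i), each of the two \textbf{for}-loops in Algorithm~\ref{algo:pigeons} iterates at most $n' = n \log n$ times and issues exactly one oracle call per iteration. The call at line~\ref{line:high-qbf-check} targets a $\Sigma_2^P$ formula and is therefore trivially within reach of a $\Sigma_3^P$ oracle, while the call at line~\ref{line:low-qbf-check} targets $\varphi_{holes}^{F'}(m)$, whose quantifier prefix $\exists\forall\exists$ places it in $\Sigma_3^P$. Summing across the two loops yields $\bigO(n \log n)$ queries to a $\Sigma_3^P$ oracle.

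For (ii), I would first invoke Lemma~\ref{lm:stockexist} applied to $F'$ to guarantee that the stock loop actually finds a valid $\chigh \leq n \log n$: since $\sol{F'} \leq 2^{n \log n}$, there exists $m' \le n \log n$ with $\sol{F'} \le 2^{m'-2}$, forcing $\varphi_{stock}^{F'}(m') = 1$. Since $\chigh$ is the \emph{smallest} such $m'$, Claim~\ref{cl:stockup} gives $\sol{F'} \le \chigh \cdot 2^{\chigh}$, while the contrapositive of Lemma~\ref{lm:stockexist} evaluated at $\chigh - 1$ gives $\sol{F'} > 2^{\chigh - 3}$. Sandwiching $2^{\chigh}$ between these bounds and taking $\log n$-th roots, exactly as in the proof of Stockmeyer's theorem, yields $\frac{\sol{F}}{16} \leq \Cest \leq 16 \cdot \sol{F}$ for $n \geq 4$. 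I do not expect any genuine obstacle here: the novel content of Algorithm~\ref{algo:pigeons}, namely the $\varphi_{holes}$ branch and the hash witnesses it produces, is irrelevant for the DAC claim and will matter only when we turn to the audit-complexity analysis in the rest of the section.
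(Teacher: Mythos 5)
Your proposal is correct and takes essentially the same route as the paper: the paper's proof also observes that $\toolname$ returns the same output as Stockmeyer's Algorithm~\ref{algo:stock} (since $\Cest$ depends only on $\chigh$, which is computed identically), so correctness inherits directly, and it counts $\bigO(n\log n)$ oracle calls across the two loops. You simply unpack the "same output" observation by re-running the Stockmeyer sandwich argument explicitly, which the paper leaves implicit.
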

\begin{proof}
Observe that \toolname\ returns the same output as that of Algorithm~\ref{algo:stock}, therefore, the correctness follows from the correctness of Algorithm~\ref{algo:stock}. Furthermore, observe that  \toolname\ makes at most $n\log  n$ calls to $\Sigma_2^P$ oracle calls and at most $n\log n$ $\Sigma_3^P$ oracle calls.
\end{proof}

But even though the correctness of the algorithm followed easily from earlier results, we will now see that proving the witness, i.e., the audit requires more work. To this end, we will show a relationship between $c_{high}$ and $c_{low}$.

We first provide sufficient condition for $\varphi_{holes}^{F}(m)  = 1$. 

\begin{lemma}[Sufficient]\label{lm:holesexist}
	If $\sol{F} \geq 2^{m+3}$,  then $	\varphi_{holes}^{F}(m)  = 1$. 
\end{lemma}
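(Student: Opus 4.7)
The plan is to use the probabilistic method, sampling the $m+1$ hash functions independently from a pairwise independent family $\mathcal{H}(n,m,2)$ and showing that with positive probability every cell $\vecbold{\alpha}\in\{0,1\}^m$ is hit by at least one solution of $F$ under some $h_i$. Once we establish this, a concrete choice of $h_1,\ldots,h_{m+1}$ exists that witnesses $\varphi_{holes}^F(m)=1$.

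First, I will fix a cell $\vecbold{\alpha}$ and a single random hash function $h_i\xleftarrow{R}\mathcal{H}(n,m,2)$, and estimate the probability that \emph{no} solution lands in $\vecbold{\alpha}$. Let $X_i=\Cnt{F}{h_i,\vecbold{\alpha}}$. By pairwise independence, $\expect[X_i]=\mu:=\sol{F}/2^m$ and $\sigma^2[X_i]\le \mu$. Since $\sol{F}\ge 2^{m+3}$, we have $\mu\ge 8$. Chebyshev's inequality then yields
\begin{align*}
\Pr[X_i=0]\;\le\;\Pr\bigl[\,|X_i-\mu|\ge \mu\,\bigr]\;\le\;\frac{\sigma^2[X_i]}{\mu^2}\;\le\;\frac{1}{\mu}\;\le\;\frac{1}{8}.
\end{align*}

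Second, sampling $h_1,\ldots,h_{m+1}$ independently from $\mathcal{H}(n,m,2)$, the events $\{X_i=0\}$ are independent across $i$, so the probability that cell $\vecbold{\alpha}$ is missed by every hash function is at most $(1/8)^{m+1}$. A union bound over the $2^m$ cells gives
\begin{align*}
\Pr\bigl[\,\exists\,\vecbold{\alpha}\in\{0,1\}^m:\; X_i(\vecbold{\alpha})=0 \text{ for all } i\,\bigr]\;\le\;2^m\cdot\Bigl(\tfrac{1}{8}\Bigr)^{m+1}\;=\;\frac{1}{2^{2m}\cdot 8}\;<\;1.
\end{align*}
Hence, with strictly positive probability, for every $\vecbold{\alpha}$ there is some $i$ and some $\vecbold{z}\in\satisfying{F}$ with $h_i(\vecbold{z})=\vecbold{\alpha}$, so a witnessing tuple $(h_1,\ldots,h_{m+1})$ exists and $\varphi_{holes}^F(m)=1$.

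The only delicate point is ensuring the independence structure used in the union bound: pairwise independence of a single $h_i$ suffices for Chebyshev on a fixed cell, and \emph{independence across $i$} (i.e. the $m+1$ hash functions are drawn independently) is what makes the $(1/8)^{m+1}$ product bound legitimate. With these two forms of independence handled separately, the calculation is routine and the constant $8$ in the hypothesis $\sol{F}\ge 2^{m+3}$ is precisely what makes the bound $<1$.
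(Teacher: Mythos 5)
Your proof is correct and follows essentially the same probabilistic-method structure as the paper: fix a cell, use a second-moment bound from pairwise independence to show a single random $h_i$ misses the cell with probability bounded away from $1$, multiply across the $m+1$ independently drawn hash functions, union bound over the $2^m$ cells, and conclude existence. The only deviation is that you invoke Chebyshev (yielding $\Pr[X_i=0]\le 1/\mu\le 1/8$) where the paper invokes Paley--Zygmund; both are second-moment arguments, and yours actually gives a slightly better per-hash-function constant, so this is a cosmetic rather than substantive difference.
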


\begin{proof}
	Let us fix a cell $\vecbold{\alpha}$ and $h_i \xleftarrow{R} H(n,m,2)$. For $\vecbold{y} \in \{0,1\}^n$, define the indicator variable $\gamma_{\vecbold{y},\vecbold{\alpha},i}$ such that 
$\gamma_{\vecbold{y},\vecbold{\alpha},i} = 1$ if  $h_{i}(\vecbold{y}) = \vecbold{\alpha}$ and $0$ otherwise.
	Let $\Cnt{F}{h_i,\vecbold{\alpha}} = \sum\limits_{y	\in \satisfying{F}} \gamma_{\vecbold{y},\vecbold{\alpha},i}$. 
	Thus $\mu_{m} = \sum\limits_{y	\in \satisfying{F}} \expect[\gamma_{\vecbold{y},\vecbold{\alpha},i}] = \frac{\sol{F}}{2^m}$.
	
	By assumption we have $\sol{F} \geq 2^{m+3}$, which implies that $\mu_{m} \geq 8$. Further, since $h_i$ is chosen from a 2-wise independent hash family, we infer that $\{\gamma_{\vecbold{y},\vecbold{\alpha},i}\}$ are $2$-wise independent, and can therefore conclude that $\sigma^2[\Cnt{F}{h_i,\vecbold{\alpha}}] \leq \mu_{m}$.
	
	Using Payley-Zygmund inequality, we have
	\begin{align*}
		\Pr[\Cnt{F}{h_i,\vecbold{\alpha}} < 1] &\leq \Pr[\Cnt{F}{h_i,\vecbold{\alpha}} <  \frac{1}{8} \cdot \mu_{m}] \\
		&
		\leq 1 - \left(1-\frac{1}{8}\right)^2\frac{\mu_{m}^2}{\sigma^2[\Cnt{F}{h_i,\vecbold{\alpha}}]+\mu_{m}^2}  \\
		&\leq 1 - \left(1-\frac{1}{8}\right)^2\frac{\mu_{m}}{1+\mu_{m}}   \\
	\end{align*}
		As $h_1, h_2, \ldots h_{m+1}$ are chosen independently, we have  
	\begin{align*}
		\Pr\left[\bigcap_{i=1}^{m+1} \{\Cnt{F}{h_i,\vecbold{\alpha}} <  1 \} \right] &=\left(\Pr\left[\{\Cnt{F}{h_i,\vecbold{\alpha}} <   1 \} \right]\right)^{m+1}  \\
		&\leq \frac{1}{2^{m+1}}
	\end{align*}  Finally, we apply union bound to bound 
	$\Pr\left[\exists \vecbold{\alpha}, \ \bigcap_{i=1}^{m+1}  \{\Cnt{F}{h_i,\vecbold{\alpha}} <   1 \} \right] \leq \frac{2^{m}}{2^{m+1}} \leq \frac{1}{2}$. Therefore, 
	$\Pr\left[\forall \vecbold{\alpha}, \ \bigcup_{i=1}^{m+1}  \{\Cnt{F}{h_i,\vecbold{\alpha}}  \ge  1 \} \right] \geq \frac{1}{2}$.
	
	Observe the probability space is defined over the random choice of $h_1, h_2, \ldots h_{m+1}$, and therefore, since the probability of randomly chosen $h$ ensuring $\forall \vecbold{\alpha}, \ \bigcup_{i=1}^{m+1}  \{\Cnt{F}{h_i,\vecbold{\alpha}}  \ge  1 \}$ is non-\vecbold{z}ero, therefore, there must be exist an $h_1, h_2, \ldots h_m$. 
	Formally,  when $\sol{F} > 2^{m+3}$, we have
	
	\begin{align*}
		\exists h_1, h_2, \ldots h_{m+1} \forall \,\, \vecbold{\alpha} \,\, \exists \vecbold{z}  \left(\bigvee_{i} F(\vecbold{z}) \wedge h_i(\vecbold{z}) = \vecbold{\alpha} \right)	
	\end{align*}
\end{proof}

From this we can infer the following fact.

\begin{lemma}
	$c_{high}-c_{low} \leq 7$.
\end{lemma}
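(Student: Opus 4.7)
The plan is to sandwich both $c_{high}$ and $c_{low}$ around $\log \sol{F'}$ using only the \emph{sufficient} conditions at our disposal: Lemma~\ref{lm:stockexist} controls $\varphi_{stock}^{F'}$ and Lemma~\ref{lm:holesexist} controls $\varphi_{holes}^{F'}$, where throughout I would apply these to the amplified formula $F' = \mathsf{MakeCopies}(F, \log n)$ that Algorithm~\ref{algo:pigeons} actually feeds to the oracles. Since both $c_{high}$ and $c_{low}$ are forced to sit within a small additive constant of the same quantity $\log \sol{F'}$, their difference is automatically bounded by a constant.

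First I would upper-bound $c_{high}$. By Lemma~\ref{lm:stockexist} applied to $F'$, every integer $m$ with $m \ge \log \sol{F'} + 2$ satisfies $\varphi_{stock}^{F'}(m) = 1$; the smallest such $m$ is at most $\lfloor \log \sol{F'} \rfloor + 3$. Since the second loop of Algorithm~\ref{algo:pigeons} sets $c_{high}$ to the \emph{smallest} $m'$ at which $\varphi_{stock}^{F'}(m') = 1$, we obtain $c_{high} \le \lfloor \log \sol{F'} \rfloor + 3$.

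Next I would lower-bound $c_{low}$. By Lemma~\ref{lm:holesexist} applied to $F'$, every integer $m$ with $m \le \log \sol{F'} - 3$ satisfies $\varphi_{holes}^{F'}(m) = 1$. Writing $m^\ast = \lfloor \log \sol{F'} \rfloor - 3$, this means $\varphi_{holes}^{F'}(i) = 1$ for every $i \in \{1, \ldots, m^\ast\}$, so the first loop of Algorithm~\ref{algo:pigeons} cannot break before $m$ exceeds $m^\ast$. Consequently $c_{low} \ge m^\ast = \lfloor \log \sol{F'} \rfloor - 3$.

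Subtracting the two bounds yields
\[
c_{high} - c_{low} \le \bigl(\lfloor \log \sol{F'} \rfloor + 3\bigr) - \bigl(\lfloor \log \sol{F'} \rfloor - 3\bigr) = 6 \le 7,
\]
which is the desired inequality. There is no deep technical obstacle: the only mild care needed is handling integer rounding of $\log \sol{F'}$, which is why the bound lands at $6$ rather than $5$, still comfortably within the stated $7$.
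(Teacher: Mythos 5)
Your proof is correct and takes essentially the same route as the paper: fix $m^\ast = \lfloor \log \sol{F'} \rfloor$ (up to a shift), then use Lemma~\ref{lm:stockexist} to upper-bound $c_{high}$ and Lemma~\ref{lm:holesexist} to lower-bound $c_{low}$ around $m^\ast$. Your rounding analysis on the $c_{low}$ side is marginally sharper (giving $6$ instead of the paper's $7$, since the paper conservatively picks $\hat{m} = m^\ast - 4$ where $m^\ast - 3$ already satisfies the hypothesis of Lemma~\ref{lm:holesexist}), but both land comfortably within the stated bound.
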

\shortversion{
\begin{proof}
    Let $m^* = \lfloor  \log |\satisfying{F'}|  \rfloor $. Then, from Lemma ~\ref{lm:stockexist}, we have $c_{high} \leq m^*+3$. Further, from Lemma~\ref{lm:holesexist}, we obtain $c_{low} \geq m^*-4$. Thus, we have  	$c_{high}-c_{low} \leq 7$
\end{proof}
}
\fullversion{
\begin{proof}
    Let $m^* = \lfloor  \log |\satisfying{F'}|  \rfloor $.
    Let $m'=m^*+3$, then observe that we have $|\satisfying{F'}| \leq 2^{m'-2}$, therefore, $\varphi_{stock}^{F'}(m') = 1$ from Lemma~\ref{lm:stockexist}. Recall that $c_{high}$ is the smallest value of $m$ such that $\varphi_{stock}^{F'}(m) = 1$. Therefore, $c_{high} \leq m' = m^*+3$.  

Similarly, let $\hat{m} = m^*-4$, then observe that $|\satisfying{F'}| \geq 2^{\hat{m}+3}$. Therefore, as per Lemma~\ref{lm:holesexist}, we have $\varphi_{holes}^{F'}(\hat{m}) = 1$. Recall that $c_{low}$ is the largest value of $m$ for which $\varphi_{holes}^{F'}(m) = 1$. Therefore, $c_{low} \geq \hat{m} = m^*-4$. Thus, we have  	$c_{high}-c_{low} \leq 7$.
\end{proof}
}

\subsection{Audit complexity of Algorithm~\ref{algo:pigeons}}

Now, we focus on certifying Algorithm~\ref{algo:pigeons}. Observe that, $c_{high}$ is the smallest value of $m$ for which $\varphi_{stock}^{F'}(m)$  evaluates to True and $c_{low}$ is the largest value of $m$ for which $\varphi_{holes}^{F'}(m)$ evaluates to True. Thus, a straightforward approach would be to certify that  $c_{high}$ is indeed smallest value of $m$ for which $\varphi_{stock}^{F'}(m)$ but that is precisely what we did in Algorithm~\ref{algo:stock-audit}, which led to the audit complexity of $O(n^2 \log ^2 n)$. How do we overcome the barrier? 

The key surprising observation is that we do not need to certify that  $c_{high}$ is indeed the smallest value: instead we will just certify that  $\varphi_{stock}^{F'}(c_{high})$ evaluates to True and $\varphi_{holes}^{F'}(c_{low})$ evaluates to True, and then we check $c_{high} - c_{low} \leq 7$, and these three facts allow the auditor to certify the estimate returned by Algorithm~\ref{algo:pigeons}. Note that we were able to side step certifying $c_{high}$ being the smallest value because we had access to $c_{low}$, which was not the case when we were trying to certify Algorithm~\ref{algo:stock}.

\begin{algorithm}
	\caption{CountAuditor$(F',\toolname(F))$}
	  \label{algo:lonely-audit}
	\begin{algorithmic}[1]
\State $\mathsf{poscheck}\gets 0$; $\mathsf{negcheck}\gets 0$;
\State  $\mathsf{poscheck}\gets \conpcheck (\varphi_{stock}^{F'}(\chigh)[h_1\ldots,h_{\chigh} \gets \hashstock])$; \label{line:zconpcheck}
                \State  $\mathsf{negcheck}\gets \twoqbfcheck(\varphi_{holes}^{F'}(\clow)[h_1\ldots h_{\clow}\gets \mathsf{hashassign}_{holes}])$;\label{line:ztwoqbfcheck}   \label{line:zconncheck}
		\If{$(\mathsf{poscheck}\land \mathsf{negcheck}==1)\land (\chigh-\clow \leq 7)$}\label{line:zequalcheck}                            
               \Return $\mathsf{Verified}$;
\EndIf
\end{algorithmic}
      
\end{algorithm}

To show correctness of CountAuditor, we also need the following, which gives a lower bound on no. of solutions:
\begin{lemma}[Necessary]
	\label{lem:holes}
	If 	$\varphi_{holes}^{F'}(m) =1$, then $\sol{F'}\geq \frac{2^{m}}{m+1}$ 
\end{lemma}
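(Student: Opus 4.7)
The plan is to use a straightforward counting/covering argument based on unpacking the semantics of $\varphi_{holes}^{F'}(m)$. If $\varphi_{holes}^{F'}(m) = 1$, then there exist hash functions $h_1, h_2, \ldots, h_{m+1}$ from $\mathcal{H}(n', m, 2)$ such that for every cell $\vecbold{\alpha} \in \{0,1\}^m$, there is at least one index $i \in \{1, \ldots, m+1\}$ and at least one solution $\vecbold{z} \in \satisfying{F'}$ with $h_i(\vecbold{z}) = \vecbold{\alpha}$. In other words, the set $\bigcup_{i=1}^{m+1} \{h_i(\vecbold{z}) : \vecbold{z} \in \satisfying{F'}\}$ covers all $2^m$ cells in $\{0,1\}^m$.

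The key step is then a simple counting argument on the total number of (solution, hash-index) pairs. Each of the $\sol{F'}$ solutions, when combined with one of the $m+1$ hash functions, produces exactly one cell; hence the multiset of cells produced across all solutions and all hash functions has size exactly $\sol{F'} \cdot (m+1)$. Since this multiset covers all $2^m$ cells, we must have
\begin{equation*}
\sol{F'} \cdot (m+1) \;\geq\; 2^m,
\end{equation*}
which rearranges to $\sol{F'} \geq \frac{2^m}{m+1}$, as desired.

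There is no real obstacle here: unlike the proof of Lemma~\ref{lm:holesexist}, no probabilistic method or concentration bound is needed, because we are extracting a deterministic lower bound from the (already established) existence of covering hash functions. The only subtlety to flag is that the bound is tight only up to the covering inequality: a single solution can be mapped to at most $m+1$ distinct cells by the $m+1$ hash functions, and the argument simply says these per-solution ``coverages'' must together account for all $2^m$ cells.
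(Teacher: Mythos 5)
Your proof is correct and is essentially the same argument as the paper's: the paper constructs a many-to-one map from the $2^m$ cells to solutions and notes each solution can be the image of at most $m+1$ cells, while you count the $\sol{F'}\cdot(m+1)$ (solution, hash-index) pairs and observe they must cover all $2^m$ cells; both are the same double-counting/pigeonhole argument, just phrased from opposite directions.
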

\begin{proof}
	Let us construct a many-to-one mapping $g$ from $\{0,1\}^m$ to $\satisfying{F'}$ such that every edge  from $\vecbold{\alpha} \in \{0,1\}^m$ to $\vecbold{z} \in \satisfying{F'}$ is labeled with a non-empty set of hash functions $\{h_i\}$ such that $h_i(\vecbold{z}) = \vecbold{\alpha}$.  %
	Such a mapping must exist when $\varphi_{holes}^{F'}(m) =1$, since for every cell $\vecbold{\alpha}$, we have  $\vecbold{z} \in \satisfying{F'}$ such that $h_i(\vecbold{z})  = \vecbold{\alpha}$ for some $h_i$.  Furthermore, observe that there can be at most $m + 1$ different $\vecbold{\alpha}$ that map to same $\vecbold{z}$ as every hash function maps $\vecbold{z}$ to only one unique $\vecbold{\alpha}$ . Therefore, the range of $g$, which is a subset of $\satisfying{F'}$, must be at least $\frac{2^m}{m  + 1}$.
\end{proof}

Finally, combining all these, we obtain:

\begin{theorem}
\label{thm:finalaudit}
  The audit complexity of Algorithm~\ref{algo:pigeons} is $O(n \log n)$.
\end{theorem}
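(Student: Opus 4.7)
The plan is to show that the three checks performed by Algorithm~\ref{algo:lonely-audit} can be merged into a single $\Sigma_2^P$ oracle query over $\bigO(n\log n)$ Boolean variables. Once $\hashstock$ and $\mathsf{hashassign}_{holes}$ are substituted, the formula $\varphi_{stock}^{F'}(\chigh)$ collapses to a purely universal formula $\forall \vecbold{z_1}\,\forall \vecbold{z_2}\,\psi_1(\vecbold{z_1},\vecbold{z_2})$ (a coNP check), while $\varphi_{holes}^{F'}(\clow)$ collapses to $\forall \vecbold{\alpha}\,\exists \vecbold{z}\,\psi_2(\vecbold{\alpha},\vecbold{z})$ (a $\Pi_2^P$ check). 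Their conjunction can be rewritten as the single statement $\forall \vecbold{z_1}\,\forall \vecbold{z_2}\,\forall \vecbold{\alpha}\,\exists \vecbold{z}\,(\psi_1 \wedge \psi_2)$, whose validity is decided by one invocation of a $\Sigma_2^P$ oracle (by querying the negation). The integer comparison $\chigh-\clow \le 7$ contributes no oracle variables at all, so it can be performed outside the query.

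Next, I would bound the size of the quantifier prefix of the merged formula. Since $F'$ has $n\log n$ variables, each of $\vecbold{z_1}$, $\vecbold{z_2}$, and $\vecbold{z}$ accounts for $n\log n$ bits. The cell label $\vecbold{\alpha}$ lies in $\{0,1\}^{\clow}$, and since $\clow \le n' = n \log n$, it also contributes at most $n\log n$ bits. All hash functions have been eliminated by substitution, so the total variable count sits firmly in $\bigO(n\log n)$, as claimed.

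The remaining obligation is to verify that a successful audit genuinely certifies the approximation guarantee on $\Cest$; this is where the main conceptual subtlety lies. Unlike Algorithm~\ref{algo:stock-audit}, the auditor here never certifies that $\chigh$ is the \emph{smallest} value for which $\varphi_{stock}^{F'}$ is True, nor that $\clow$ is the \emph{largest} value for which $\varphi_{holes}^{F'}$ is True. Instead, poscheck yields the upper bound $\sol{F'}\le \chigh\cdot 2^{\chigh}$ via Claim~\ref{cl:stockup}, and negcheck yields the lower bound $\sol{F'}\ge 2^{\clow}/(\clow+1)$ via Lemma~\ref{lem:holes}. The gap constraint $\chigh-\clow\le 7$ then pins $\sol{F'}$ inside an interval whose endpoints are within a factor polynomial in $n\log n$ of $2^{\chigh}$. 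Taking $\log n$-th roots (which is precisely why $F'$ is built as $\log n$ conjoined copies of $F$) converts this into a constant-factor approximation of $\sol{F}$, mirroring the final step of the proof of Stockmeyer's theorem. The conceptual payoff, and the key point driving the bound, is that by running both the $\varphi_{stock}$ and the $\varphi_{holes}$ searches at counting time to obtain $\chigh$ and $\clow$, we relieve the auditor of having to certify tightness, which was the bottleneck driving Algorithm~\ref{algo:stock-audit} to $\bigO(n^2\log^2 n)$.
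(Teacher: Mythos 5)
Your proposal is correct and follows essentially the same route as the paper's proof: it merges the poscheck (coNP, since the hash functions in $\varphi_{stock}^{F'}(\chigh)$ are substituted) and the negcheck ($\Pi_2^P$, since the hash functions in $\varphi_{holes}^{F'}(\clow)$ are substituted) into one $\forall^*\exists^*$ query handled by a single $\Sigma_2^P$ oracle call, derives $\frac{2^{\clow}}{\clow+1} \le \sol{F'} \le \chigh\cdot 2^{\chigh}$ from Lemma~\ref{lem:holes} and Claim~\ref{cl:stockup}, invokes the gap constraint $\chigh-\clow\le 7$, and recovers a constant-factor bound for $\sol{F}$ by taking $\log n$-th roots. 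Your variable count is in fact slightly more careful than the paper's own phrasing (which only names $\vecbold{\alpha}$ and $\vecbold{z}$, omitting to mention $\vecbold{z_1},\vecbold{z_2}$ from the poscheck half), but both arrive at the same $\bigO(n\log n)$ bound.
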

\shortversion{
\begin{proof}
	We will first show that Algorithm~\ref{algo:lonely-audit} certifies the estimate returned by Algorithm~\ref{algo:pigeons}, and then focus on measuring the audit complexity. From the True answers to $\mathsf{poscheck}$ and $\mathsf{negcheck}$, the auditor can conclude using Lemma~\ref{lem:holes} and Claim~\ref{cl:stockup} that
\begin{align}
  \frac{2^{\clow}}{\clow + 1} \leq \sol{F'}\leq \chigh \cdot 2^{\chigh}
  \label{eq:imp}
\end{align}

From the rhs of Eq~\ref{eq:imp}, and since $\chigh\leq n \log n \leq n^2$, we  have 
$  \frac{\sol{F'}}{n^2} \leq \frac{\sol{F'}}{\chigh}\leq  2^{\chigh}$

Now, from line~\ref{line:zequalcheck} of the audit, we have $\chigh-\clow\leq 7$. We combine this with lhs of Eq~\ref{eq:imp}, and observe that $\clow\leq n \log n$, and hence $\clow + 1 \leq n^2$ for $n\geq 12$, $n^2>128$, 

\begin{align*}
  2^{\chigh} \leq n^2 \cdot 2^{\clow} \leq n^2\cdot  (\clow + 1) \cdot  \sol{F'} \leq n^4\cdot \sol{F'}
\end{align*}

Noting that $\sol{F'}=\sol{F}^{\log n}$, we have
\begin{align*}
  \big(\frac{\sol{F}}{4}\big)^{\log n} \leq (2^{\frac{\chigh}{\log n}})^{\log n} \leq (16\cdot \sol{F})^{\log n}\\
  \frac{\sol{F}}{4} \leq \Cest \leq 16\cdot \sol{F}
  \end{align*}

which is a 16-factor approximation. As before, the calls in line~\ref{line:zconpcheck} and line~\ref{line:zconncheck} can be combined and answered using a single $\Sigma_2^P$ oracle query. Finally, the number of variables on which this query is made is $O(|\vecbold{\alpha}|+|\vecbold{z}|)$, thus of size $O(n')=O(n\log n)$.
}

\fullversion{
\begin{proof}
	We will first show that Algorithm~\ref{algo:lonely-audit} certifies the estimate returned by Algorithm~\ref{algo:pigeons}, and then focus on measuring the audit complexity. We start by noting that $\clow, \chigh$ are at most $m\leq n\log n$. From the True answers to $\mathsf{poscheck}$ and $\mathsf{negcheck}$, the auditor can conclude using Lemma~\ref{lem:holes} and Claim~\ref{cl:stockup} that
\begin{align}
  \frac{2^{\clow}}{\clow + 1} \leq \sol{F'}\leq \chigh \cdot 2^{\chigh}
  \label{eq:imp}
\end{align}

From the rhs of Eq~\ref{eq:imp}, and since $\chigh\leq n \log n \leq n^2$, we  have 
$  \frac{\sol{F'}}{n^2} \leq \frac{\sol{F'}}{\chigh}\leq  2^{\chigh}$

Now, from line~\ref{line:zequalcheck} of the audit, we have $\chigh-\clow\leq 7$. We combine this with lhs of Eq~\ref{eq:imp}, and observe that $\clow\leq n \log n$, and hence $\clow + 1 \leq n^2$ for $n\geq 12$, $n^2>128$, 

\begin{align*}
  2^{\chigh} \leq n^2 \cdot 2^{\clow} \leq n^2\cdot  (\clow + 1) \cdot  \sol{F'} \leq n^4\cdot \sol{F'}
\end{align*}

Noting that $\sol{F'}=\sol{F}^{\log n}$, we have
\begin{align*}
  \big(\frac{\sol{F}}{4}\big)^{\log n} &\leq (2^{\frac{\chigh}{\log n}})^{\log n} \leq (16\cdot \sol{F})^{\log n}\\
  \frac{\sol{F}}{4} &\leq \Cest \leq 16\cdot \sol{F}
  \end{align*}

which is a 16-factor approximation. Notice that to obtain this guarantee, we indeed used not only the witnessing hash functions $\mathsf{hashassign_{stock}}$ and $\mathsf{hashassign_{holes}}$, but also critically, the fact that $\chigh-\clow\leq 7$. As before, the calls in line~\ref{line:zconpcheck} and line~\ref{line:zconncheck} of Algorithm~\ref{algo:lonely-audit} can be combined and answered using a single $\Sigma_2^P$ oracle query.

Finally, the number of variables on which this query is made is bounded by the number of variables in $\varphi^{F'}_{holes}$, where the hash functions are given as a witness. Hence the remaining variables for the query are just $\vecbold{\alpha}$ and $\vecbold{z}$. Thus the audit complexity of the Algorithm is bounded above by $O(|\vecbold{\alpha}|+|\vecbold{z}|)$, thus of size $O(n')=O(n\log n)$.
\end{proof}
  }

\section{Discussion and Perspectives}
\label{sec:conclusion}
The design of algorithms has traditionally focused on optimizing
resources like time, space, communication or access to random bits
without necessarily considering how hard it is to independently
certify the correctness of results.  Yet, the need for certified or
auditable results is becoming increasingly important in a world where
untrusted components are unavoidable in the implementation of any
non-trivial software.

In this paper, we presented a principled
approach towards addressing this problem by showing that appropriate
measures of audit complexity can be treated as a first-class
computational resource to be optimized at the design stage of an
algorithm itself, rather than as an afterthought.  For deterministic
approximate counting, this approach yields completely new algorithms
that are sub-optimal with respect to time/space resources, and yet
permit significantly simpler certification in practice.  We believe
the same ``trade off'' applies to other hard combinatorial problems as
well, and one needs to navigate the algorithm design space carefully
to strike the right balance between audit complexity and
time/space/communication complexity.  We hope our work opens the door
for further such investigations.  As concrete steps for future work,
we wish to explore the audit complexity versus time/space complexity
tradeoff in probabilistic approximate counting.

\paragraph*{Acknowledgments}
This work was supported in part by National Research Foundation Singapore under its NRF Fellowship Programme[NRF-NRFFAI1-2019-0004 ] , Ministry of Education Singapore Tier 2 grant MOE-T2EP20121-0011, and Ministry of Education Singapore Tier 1 Grant [R-252-000-B59-114 ].

\end{document}